\documentclass[journal,twoside,web]{ieeecolor}
\usepackage{generic}
\usepackage{cite}
\usepackage{amsmath,amssymb,amsfonts}
\usepackage{graphicx}
\usepackage{hyperref}
\usepackage{algpseudocode}
\usepackage{textcomp}
\usepackage{xcolor}
\usepackage{stfloats}
\usepackage{booktabs}
\let\labelindent\relax
\usepackage{enumitem}
\usepackage{arydshln}
\usepackage{textcomp}
\usepackage{multirow}
\usepackage{mathtools}
\usepackage[thinc]{esdiff}
\usepackage{times}
\usepackage{bm}
\usepackage{algorithm}
\usepackage{algpseudocode}
\usepackage{accents}
\usepackage{caption}
\usepackage{subcaption}

\usepackage{tikz}
\usepackage[OT1]{fontenc} 

\hypersetup{hypertex=true,
	colorlinks=true,
	linkcolor=blue,
	urlcolor=black,
	anchorcolor=blue,
	citecolor=blue}

\newlength{\dhatheight}

\newtheorem{assumption}{Assumption}[section]
\newtheorem{definition}{Definition}[section]

\newtheorem{theorem}{Theorem}[section]

\newtheorem{lemma}{Lemma}[section]

\newtheorem{remark}{Remark}

\newcommand\norm[1]{\|#1\|}

\newcommand{\Var}{\mathbb{V}\mathrm{ar}}
\newcommand{\Cov}{\mathbb{C}\mathrm{ov}}
\newcommand{\Tr}[1]{\text{Tr}\left\{#1\right\}}

\def\BibTeX{{\rm B\kern-.05em{\sc i\kern-.025em b}\kern-.08em
		T\kern-.1667em\lower.7ex\hbox{E}\kern-.125emX}}
\begin{document}
\title{Fundamental Limitations of Data-Driven Control: \\ A Statistical Decision Perspective}
\author{Jiabao He, Feiran Zhao, Yushan Li, Yue Ju, Florian D\"orfler and H\r{a}kan Hjalmarsson
		\thanks{Jiabao He, Yushan Li, Yue Ju and H\r{a}kan Hjalmarsson are with the Department of Decision and Control Systems, School of Electrical Engineering and Computer Science, KTH Royal Institute of Technology, 100 44 Stockholm, Sweden.  (Emails: jiabaoh, yushanl, yuej, hjalmars@kth.se)}
\thanks{Feiran Zhao, and Florian D\"orfler are with the Automatic Control Laboratory, ETH Z\"urich, Physikstrasse 3, 8092 Z\"urich, Switzerland  (Emails:zhaofe@control.ee.ethz.ch, dorfler@ethz.ch)}
}
	
	\maketitle
	\vspace{-3mm}
\begin{abstract}
	Substantial research efforts have been devoted to the design of data-driven controllers; however, comparatively less is known about their statistical performance and fundamental limitations. This contribution develops a statistical decision framework for data-driven control, in which a controller is evaluated by its risk, defined as the expected performance degradation relative to the oracle model-based controller, and by its average risk over the parameter space. Within this framework, we propose a collection of design principles for data-driven controllers. We further derive lower bounds on risks by combining the bias-variance decomposition with the Cram\'er-Rao inequality. In particular, the optimal bias that attains the lower bound for the average risk is determined by calculus of variations, thereby making the bias-variance tradeoff in data-driven control explicit. Moreover, the derived bound reveals a ``waterbed'' effect in data-driven control: any improvement in risk relative to the lower bound over one region of the parameter space must be compensated by deterioration elsewhere. We illustrate the proposed framework on two canonical data-driven control problems: optimal feedforward control and the linear quadratic regulator benchmark. By comparing several representative data-driven controllers with the derived lower bounds, we sharpen the statistical interpretation of existing methods and reveal quantitative limitations that no controller design can avoid. 
\end{abstract}
	
\begin{IEEEkeywords}
	statistical decision theory, data-driven control, fundamental limits, LQR
\end{IEEEkeywords}

\section{Introduction} \label{Sct1}
\vspace{-1mm}
Data-driven control, the task of designing controllers directly from process data, arises in a wide range of disciplines and has been studied for more than half a century. Early research paradigms include Iterative Feedback Tuning \cite{Hjalmarsson1998iterative}, correlation-based approaches \cite{Karimi2004iterative}, and Virtual Reference Feedback Tuning \cite{Campi2002virtual}; approaches that blend identification and control, such as dual control \cite{Wittenmark1995adaptive}, and identification for control \cite{Hjalmarsson2005experiment}; and many others. We refer to the survey \cite{Hou2013model} for an earlier overview of these developments. Recently, the area has witnessed a renewed surge of interest, driven by advances at the intersection of control, learning, optimization and related fields. Among recent developments,  two canonical benchmark problems have attracted considerable attention: data-driven model predictive control (MPC)  and linear quadratic regulator (LQR). In data-driven MPC, Willems' behavioral framework \cite{Willems2005note} has been leveraged in \cite{Coulson2019data,Berberich2020data,Liu2022data}, where collected input-output trajectories are incorporated into the constraints together with regularization terms in the control objective. The subspace predictive control \cite{Favoreel1999spc} has been further developed in \cite{Breschi2023data,Smith2024optimal,Dinkla2026closed}.  Recently, it has been shown that the behavioral and model-based approaches can be unified in a Bayesian framework \cite{Chiuso2025harnessing}. For data-driven LQR, representative contributions include reinforcement learning methods \cite{Fazel2018global,Recht2019tour,Yaghmaie2022linear}, robust design \cite{Dean2020sample}, certainty-equivalence (CE) method \cite{Mania2019certainty}, data informativity \cite{Van2020data}, direct data-driven methods  \cite{De2019formulas,De2021low,Dorfler2022bridging,Dorfler2023certainty,Zhao2025data,Zhao2025regularization}, and Bayesian method \cite{Schwaller2026bayesian}. We refer to the double special issues \cite{Dorfler2023dataA,Dorfler2023dataB} for many more contributions.

Despite the broad and rapidly growing literature on data-driven control, most existing works focus on the design and analysis of particular controllers. Comparisons between different data-driven control methods often still lead to the answer ``it depends'', as noted in \cite{Dorfler2023dataB}; consequently, determining which method should be preferred remains a nontrivial question. In theoretical analyses and numerical case studies, the advantage of one method over another is often demonstrated by showing that its performance is closer to that of the optimal model-based controller. This motivates us to raise a more foundational question: is there a limit to how close any data-driven controller can get to this ideal performance benchmark? Answering this question is important, because such a limit points directly to the inescapable hardness of the problem, thereby clarifying what is achievable, and conversely what is not achievable, in data-driven control.

This work addresses the above problem by formulating offline data-driven control as a statistical decision problem. A data-driven controller is viewed as a decision rule constructed from data, and its performance is assessed by a risk measure: the expected deterioration in performance relative to the oracle model-based controller. We further introduce the average risk, which aggregates this deterioration over the parameter space of interest. Within this framework, lower bounds on risks are derived. Since these bounds are independent of any particular design method, they serve as benchmarks for practical algorithms. Moreover, they reveal how the intrinsic difficulty of the problem is shaped by both the system under study and the information contained in data.

\vspace{-3mm}
\subsection{Related Work} \label{Sct1.1}

Although fundamental limitations are well established in classical model-based control \cite{Seron2012fundamental}, the limits on data-driven control are less studied.  Recent statistical analyses of data-driven control have studied stability and robustness guarantees \cite{Berberich2025overview}, suboptimality bounds \cite{Shi2025suboptimality}, and CE principle \cite{Liu2026certainty} for data-driven MPC; suboptimality bounds \cite{Dorfler2023certainty}, convergence \cite{Fazel2018global}, consistency \cite{Zeng2025noise}, and sample complexity for data-driven LQR \cite{Al2023sample,Dean2020sample,Mania2019certainty} and LQG \cite{Zheng2021sample}; relations between direct and indirect data-driven control methods \cite{Dorfler2022bridging,Fiedler2021relationship,Mattsson2024equivalence}, and what type of linear systems are hard to learn  \cite{Oymak2021revisiting,Jedra2022finite,Tsiamis2023statistical} and control \cite{Tsiamis2022learning}.

From the perspective of statistical decision theory, data-driven control closely resembles classical point estimation: in both cases, one infers quantities of interest from data and then makes decisions based on that information. This analogy has already influenced several developments in data-driven control. For instance, motivated in part by advances in regularized parameter estimation \cite{Pillonetto2022regularized}, model kernels that account for the expected deterioration in control performance have been proposed in \cite{Formentin2021control}; see also \cite{Care2023kernel} for a broader discussion of regularization in data-driven control. Another line of work addresses noisy data through an average-risk framework, leading to what may be termed Bayes control methods \cite{Scampicchio2019bayesian,Ferizbegovic2021bayes}. Meanwhile, fundamental limitations of point estimation are well established. A standard approach is to combine the bias-variance decomposition with the Cram\'er-Rao lower bound (CRLB) on the variance \cite{Fabian1977cramer,Bobrovsky1987some}, which leads to global Cram\'er--Rao bounds \cite{Bobrovsky1987some,Rao1992cramer} and related Bayesian bounds \cite{Gill1995applications,Brown1990information,Ben2009lower}. These developments suggest that tools from statistical decision theory \cite{Wald1950statistical,Lehmann1998theory} may provide a principled way to study not only the design of data-driven controllers, but also their fundamental limitations. 
A recent step in this direction was taken in \cite{Colin2024bias}, which proposed a bias-variance perspective on data-driven control. There, the bias and variance terms quantify the increase in control cost due to systematic error and controller variability, respectively. Moreover, lower bounds on the regret of online LQR were derived in \cite{Simchowitz2020naive,Cassel2020logarithmic,Ziemann2024regret,Bartos2026optimistic}, and local minimax lower bounds for offline LQR and LQG were given in \cite{Lee2023fundamental} and \cite{Lee2026fragility}, respectively. Compared with these works, this paper studies global lower bounds for data-driven control in a more general statistical decision framework. Moreover, we characterize the optimal bias that attains the lower bound for the average risk, thereby making the bias-variance tradeoff in data-driven control explicit.
\vspace{-3mm}	
\subsection{Contributions} \label{Sct1.2}
	
The main contributions of this paper are four-fold:

(1) We introduce a general analysis framework for data-driven control  from the perspective of statistical decision theory, in which a collection of decision rules for data-driven control is developed.

(2) By combining the bias-variance decomposition with the CRLB, we derive fundamental lower bounds for data-driven control in a dynamical system setting. In particular, we formulate the problem of lower bounding the average risk as a calculus of variations problem, which can be further cast as an ordinary differential equation (ODE) boundary value problem. The resulting solution yields a tight lower bound and characterizes the bias function when the bound is attained, thus making the bias-variance tradeoff in data-driven control explicit. 

(3) A ``waterbed'' effect: Our results suggest that, for a data-driven controller, any reduction in risk relative to the lower bound in one region of the parameter space must be compensated by increased risk elsewhere. We refer to this phenomenon as a ``waterbed'' effect\footnote{An analogous limitation is well known in control theory: Bode's sensitivity integral shows that reductions in sensitivity over certain frequency ranges must be compensated by increased sensitivity elsewhere \cite{Doyle2013feedback}.}.

(4) We apply the proposed framework to derive lower bounds for two canonical data-driven control problems: optimal feedforward control and the benchmark LQR. These bounds reveal quantitative limitations that no controller design can avoid and serve as benchmark for interpreting existing methods. Comparisons with several representative controllers show, in particular, that Bayesian methods are especially effective in approaching the lower bound on the average risk, in line with recent work \cite{Chiuso2025harnessing,Schwaller2026bayesian}.

\vspace{-3mm}
\subsection{Structure} \label{Sct1.4}

The remainder of this paper is structured as follows: Section~\ref{Sct2} introduces a statistical decision framework, in which a collection of decision rules is presented. Section~\ref{Sct4} describes a dynamical system setting for data-driven control. Section~\ref{Sct5} develops lower bounds for risks of data-driven control. In two parallel Sections~\ref{Sct6} and \ref{Sct7}, we apply the proposed framework to two data-driven control tasks: the optimal feedforward control problem and the LQR problem, respectively. Finally, this paper is concluded in Section~\ref{Sct8}. 

\vspace{-3mm}
\section{Preliminaries on Statistical Decision Theory} \label{Sct2}

This section introduces a statistical decision framework that provides a common viewpoint for point estimation and data-driven control using offline data.  Consider a setting in which the unknowns of the problem at hand are represented by a parameter vector $\theta \in \mathcal{D}_\theta \subset \mathbb{R}^{n_\theta}$, where $\mathcal{D}_\theta$ denotes the feasible set of $\theta$. Let the optimal decision be $h(\theta): \mathbb{R}^{n_\theta} \to \mathbb{R}^{n_h}$, and the loss for taking decision $\hat{h}$ be $L_\theta(\hat{h})$ when the true parameter is $\theta$. We focus on the quadratic loss function, which is widely used in the literature. Without loss of generality we assume that $n_h \leq n_\theta$, $L_\theta\left(h(\theta)\right) = 0$, and write the loss as
\begin{equation} \label{loss}
	L_\theta(\hat{h}) = \big(\hat{h} - h(\theta)\big)^\top W(\theta) \big(\hat{h} - h(\theta)\big),
\end{equation}
where $W(\theta) \succ 0$ is a problem-dependent weighting matrix, quantifying the importance of making a correct decision when the unknown is $\theta$. We only have access to $\theta$ indirectly through the realization of a random vector $Z$ with distribution $p(z;\theta)$, which constitutes the data. The decision rule will thus be a function of $Z$, $\hat{h} = \hat{h}(Z)$, and therefore a random variable. Within this framework, point estimation corresponds to the special case $h(\theta)=\theta$. In data-driven control, $h(\theta)$ denotes the optimal model-based controller, whereas $\hat h(Z)$ denotes a data-driven controller.

The long-term performance of the decision rule $\hat{h}$ is given by the risk (the expected loss)
\begin{equation} \label{risk}
	R_\theta(\hat{h}) := \mathbb{E}_\theta\left[L_\theta(\hat{h})\right] = \int L_\theta(\hat{h}(z))p(z;\theta)dz.
\end{equation}
Moreover, the performance of the decision rule $\hat{h}$ over all possible parameters in $\mathcal{D}_\theta$ is captured by the weighted average risk
\begin{equation} \label{average_risk}
r_\pi(\hat h)=\int_{\mathcal{D}_\theta} R_\theta(\hat h) \pi(\theta) d\theta,
\end{equation}
where $\pi(\theta)$ is a nonnegative weighting function. In Bayesian decision theory, $\pi$ is interpreted as a prior that encodes our subjective belief about the unknown system parameter. Then, minimizing \eqref{average_risk} leads to a Bayes decision rule; see \cite[Sect.~4.1]{Lehmann2006theory} and the references therein for further discussion. In this paper, however, we do not rely on this Bayesian interpretation. Instead, we use \eqref{average_risk} as a risk-aggregation criterion for constructing and evaluating decision rules with favorable performance over the parameter domain. 

Throughout the paper, we take $\pi \equiv 1$, so that all parameters in $\mathcal{D}_\theta$ are weighted equally. The corresponding average risk is denoted by $r(\hat h)$. It is straightforward to generalize our results to the weighted average risk $r_\pi(\hat h)$ in \eqref{average_risk} with a general weighting $\pi(\theta)$.

Since $L_\theta(\hat{h})$ is quadratic, we have
\begin{equation} \label{risk-decomposition}
	\begin{split}
		R_\theta(\hat{h}) &= \mathbb{E}_\theta\left[\left(\hat{h} - h(\theta)\right)^\top W(\theta) \left(\hat{h} - h(\theta)\right)\right]\\
		&=\Tr{W(\theta)\mathbb{E}_\theta\left[\left(\hat{h} - h(\theta)\right)\left(\hat{h} - h(\theta)\right)^\top\right]} \\
		&=\Tr{W(\theta)\left(b_\theta(\hat{h})b_\theta(\hat{h})^\top + \Var_{\theta}[\hat{h}]\right)} \\
		&=\norm{b_\theta(\hat{h})}_{W(\theta)}^2 + \Tr{W(\theta)\Var_{\theta}[\hat{h}]}, 
	\end{split}
\end{equation} 
where 
\begin{subequations} \label{risk-bias}
	\begin{align}
		b_\theta(\hat{h}) &= \mathbb{E}_\theta[\hat{h}] - h(\theta), \\
		\Var_{\theta}\left[\hat{h}\right] &= \mathbb{E}_\theta\left[\left(\hat{h} - \mathbb{E}_\theta[\hat{h}]\right) \left(\hat{h} - \mathbb{E}_\theta[\hat{h}]\right)^\top\right],
	\end{align}
\end{subequations}
are the bias and variance of $\hat{h}$, respectively. The third equality in \eqref{risk-decomposition} follows from the bias-variance decomposition. In \eqref{risk-decomposition}, the notation $\norm{b_\theta(\hat{h})}_{W(\theta)}^2 = {b_\theta(\hat{h})^\top W(\theta)b_\theta(\hat{h})}$, and $\Tr{W(\theta)}$ denotes the trace of $W(\theta)$, respectively.

The question now is how to design a decision rule $\hat h(Z)$ such that the risk $R_\theta(\hat{h})$ is minimized. It is immediate to see that there is no uniformly (in $\theta$) best decision rules. Indeed, setting $\hat h = h(\bar\theta)$ attains zero risk at $\theta = \bar\theta$. However, the proposed decision rule may be very bad when $\theta \neq \bar\theta$. Any data-driven rule $\hat h(Z)$ has nonzero variance and therefore cannot achieve zero risk. The question arises how to construct decision rules with low risk regardless of $\theta$. 
\vspace{-3mm}
\subsection{Design Principles} \label{Sct2.1}
In this subsection we outline some basic principles for constructing data-driven decision rules. We let $\mathcal{D}_h$ be the set of decision rules that we can choose from. Central to all techniques is that they aim at making the risk small but that they handle the fact that $\theta$ is unknown in different ways.

\subsubsection{Unbiased Decision Rules} \label{Sct2.1.1}

An unbiased decision rule $\hat{h}_{\text{UB}}$ is defined by 
\begin{equation} \label{eq:unbiased_ddc}
	\mathbb{E}_\theta[\hat{h}_{\text{UB}}(Z)] = h(\theta), \ \forall \ \theta\in \mathcal{D}_\theta.
\end{equation}
For certain families of distributions it is possible to derive an optimal unbiased decision rule, called the uniformly minimum risk unbiased decision rule, within this class.

\subsubsection{Maximum Likelihood (ML) Decision Rules} \label{Sct2.2.2}

The well-known ML estimate of $\theta$ is given by
\begin{equation}
	\hat{\theta}_{\text{ML}}(Z):=\underset{\theta\in \mathcal{D}_\theta}{\arg\max} \ p(z;\theta).
\end{equation}
By defining a likelihood function for the optimal decision $h(\theta)$, one may generalize from the ML estimate to a decision rule. We follow the approach in \cite{Zehna1966invariance} for such a generalization, where a likelihood function for a given value $\eta\in\mathcal{D}_h$ of $h(\theta)$ is defined by
\begin{equation}
p_\text{M}(z;\eta)
=
\sup_{\theta\in\mathcal{D}_\theta: \ h(\theta)=\eta} p(z;\theta).
\end{equation}
We call $p_\text{M}$ the max-likelihood and define the maximum max-likelihood decision rule for $h(\theta)$ as
\begin{equation}
\hat{h}_{\text{MML}}(Z)
=
\underset{\eta\in \mathcal{D}_h}{\arg\max}\ p_\text{M}(Z;\eta).
\end{equation}
Since $\hat{\theta}_{\text{ML}}(Z)$ is the ML estimator of $\theta$, for any $\eta\in\mathcal{D}_h$ we have
$p_\text{M}(Z;\eta)\leq p(Z;\hat{\theta}_{\text{ML}}(Z))$. On the other hand, for $\eta=h(\hat{\theta}_{\text{ML}}(Z))$, the parameter $\hat{\theta}_{\text{ML}}(Z)$ is feasible in the supremum defining $p_\text{M}$, and hence $p_\text{M}(Z;h(\hat{\theta}_{\text{ML}}(Z)))
=p(Z;\hat{\theta}_{\text{ML}}(Z))$. Therefore, if $h(\hat{\theta}_{\text{ML}}(Z))\in\mathcal{D}_h$ and the maximizer of $p_\text{M}(Z;\eta)$ is unique, it follows that
\begin{equation}\label{eq:MML}
\hat{h}_{\text{MML}}(Z)
= h(\hat{\theta}_{\text{ML}}(Z)).
\end{equation}

\subsubsection{Loss and Certainty-Equivalence (CE) Tuning}\label{Sct2.2.3}

Another approach is to estimate the loss function $L_\theta$, which is unknown due to its dependency on $\theta$, and choose the parameter that minimizes the estimated loss. The most straightforward way to estimate the loss is to replace the unknown $\theta$ in the loss $L_\theta$ by an estimate $\hat{\theta}(Z)$ and then in the given family $\mathcal{D}_h$ of decisions pick
\begin{equation}
	\hat{h}_{\text{CE}(\hat{\theta})}(Z)=\underset{\hat{h}\in \mathcal{D}_h}{\arg\min}\ L_{\hat{\theta}(Z)}(\hat{h}).
\end{equation}
When the decision rule $h(\hat{\theta}(Z))$ is included in $\mathcal{D}_h$, this will result in the decision
\begin{equation} \label{eq:CEP}
	\hat{h}_{\text{CE}(\hat{\theta})}(Z)=h(\hat{\theta}(Z)).
\end{equation}
This approach is commonly known as the CE principle since it uses the parameter estimate as if it is the true value in the optimal decision rule. Comparing \eqref{eq:CEP} with \eqref{eq:MML} we see that using the ML estimate of $\theta$ results in the ML estimate of $h(\theta)$, i.e., $\hat{h}_{\text{CE}(\hat{\theta}_{\text{ML}})}(Z)=h(\hat{\theta}_{\text{ML}}(Z))$.

\subsubsection{(Weighted) Average Risk Tuning} \label{Sct2.2.4}

A way to bypass the problem that the loss and risk functions that we would like to minimize are functions of $\theta$ is to instead replace the loss by its average in the $\theta$-domain. This leads naturally to the average risk $r(\hat h)$ defined in \eqref{average_risk}. Interchanging the order of integration gives
\begin{equation}
	r(\hat h)=\int\left(\int_{\mathcal{D}_\theta} L_\theta(\hat h)p(z;\theta)d\theta\right)dz.
\end{equation}
Hence, for each fixed observation $z$, we are led to minimize
\begin{equation}
	V_z(\hat h) = \int_{\mathcal{D}_\theta} L_\theta(\hat h) p(z;\theta)d\theta.
\end{equation}
Let $\hat h_{A(P)}(z)$ denote the minimizer of $V_z(\hat h)$, where the subscript $A$ stands for the average risk minimization and $P$ has the origin in the Pitman estimator \cite{Pitman1939estimation}. Since $\hat h_{A(P)}(z)$ minimizes the integrand pointwise in $z$, the resulting rule $\hat h_{A(P)}$ minimizes the average risk $r(\hat h)$. For the quadratic loss $L_\theta(\hat h)$, the unconstrained minimizer takes the form
\begin{equation} \label{eq:Bayes_control}
	\hat h_{A(P)}(z)=\overline W^{-1}(z) \overline H(z),
\end{equation}
where
\begin{subequations}
	\begin{align}
		\overline W(z)&:=\int_{\mathcal{D}_\theta} W(\theta)p(z;\theta)d\theta,\\
		\overline H(z)&:=\int_{\mathcal{D}_\theta} W(\theta)h(\theta)p(z;\theta)d\theta.
	\end{align}
\end{subequations}
It should be mentioned that to compute $\hat h_{A(P)}(z)$, the integrals $\overline W(z)$ and $\overline H(z)$ have to be computed, which in general is intractable and requires sampling methods \cite{Lehmann1998theory,Berger2013statistical}.

By the same argument as the average risk $r(\hat h)$ where $\pi\equiv 1$, the minimizer of the general $r_\pi(\hat h)$ in \eqref{average_risk}, denoted by $\hat h_{A(\pi)}(z)$, can be derived in a similar form to \eqref{eq:Bayes_control}.

\subsubsection{Composite Methods}\label{Sct2.2.5}

In statistical decision theory, additional design principles such as minimax and risk tuning methods may also be considered. Together with the four families of methods introduced above, these can be combined to form a broad class of composite methods. The basic idea is to use one method to construct an intermediate parameter estimate or decision rule, and then use this as input to another method. Altogether, this shows that the different method families are not isolated alternatives, but can be combined in many ways to construct decision rules.

\vspace{-3mm}
    \subsection{An Illustrative Example: Point Estimation} \label{Sct2.2}
    The main objective of this work is to derive lower bounds on the pointwise risk $R_\theta(\hat h)$ and the average risk $r(\hat h)$ for data-driven control problems in a dynamical system setting. Before turning to this control problem, we first revisit related results in a classical point estimation problem. This simpler setting provides a useful preview of the results  that will later arise, in an analogous form, in data-driven control.
    
Consider the scalar Gaussian model $Z \sim \mathcal{N}(\theta, \sigma_z^2)$ with $|\theta| \leq L$, for some constant $L < \infty$. We are interested in estimating $\theta$ from a realization $z$ of the random variable $Z$. Three estimators are considered: the ML estimator $\hat{\theta}_{\text{ML}} = z$, the trivial estimator $\hat{\theta}_0 = 0$ which assumes a particular value of $\theta$, and the Bayes estimator $\hat{\theta}_{\text{Bayes}}$ using the uniform prior over the interval $\mathcal{D}_\theta$, given by $\hat{\theta}_{\text{Bayes}}(z)
	=\frac{\int_{-L}^{L}\theta\exp(-\frac{(z-\theta)^2}{2\sigma_z^2})d\theta
	}{\int_{-L}^{L}\exp(-\frac{(z-\theta)^2}{2\sigma_z^2})d\theta}$. 

The loss of an estimator $\hat{\theta}$ is taken as $L_\theta(\hat{\theta}) = \big(\hat{\theta} - \theta\big)^2$. Then, the corresponding risk $R_\theta(\hat h)$ and the average risk $r(\hat h)$ can be defined as in \eqref{risk} and \eqref{average_risk}, respectively. Using Theorems~\ref{Thm:CRLB} and~\ref{Theorem-EL-Equation-vector} in this work (to be presented later in Section~\ref{Sct5}), we obtain that for any estimator, the average risk $r(\hat{\theta})$ over the interval $\mathcal{D}_\theta = [-L,L]$ is lower bounded by
\begin{equation} \label{Gaussian-uniform-prior-lower-bound}
	\begin{split}
		r(\hat{\theta}) 
		&\geq \int_{-L}^{L}  
		\underbrace{(b_\theta^*)^2 
		+ \bigl((b_\theta^*)'+1\bigr)^2 \sigma_z^2}_{f\bigl(\theta,\sigma_z\bigr)}\,d\theta \\
		&= 2L\sigma_z^2
		\left(1-\frac{\tanh(L/\sigma_z)}{L/\sigma_z}\right),
	\end{split}
\end{equation}
where $b_\theta^* = -\sigma_z\frac{\sinh(\frac{\theta}{\sigma_z})}{\cosh(\frac{\theta}{\sigma_z})}$ is the optimal bias that attains the lower bound, and $(b_\theta^*)'$ is its derivative w.r.t. $\theta$. The point estimation bound \eqref{Gaussian-uniform-prior-lower-bound} appeared in \cite{Young1971error,Ben2009lower} and is recovered here as a special case of Theorems~\ref{Thm:CRLB} and~\ref{Theorem-EL-Equation-vector}.

\begin{figure}
	\centering
	\includegraphics[scale=0.35]{\detokenize{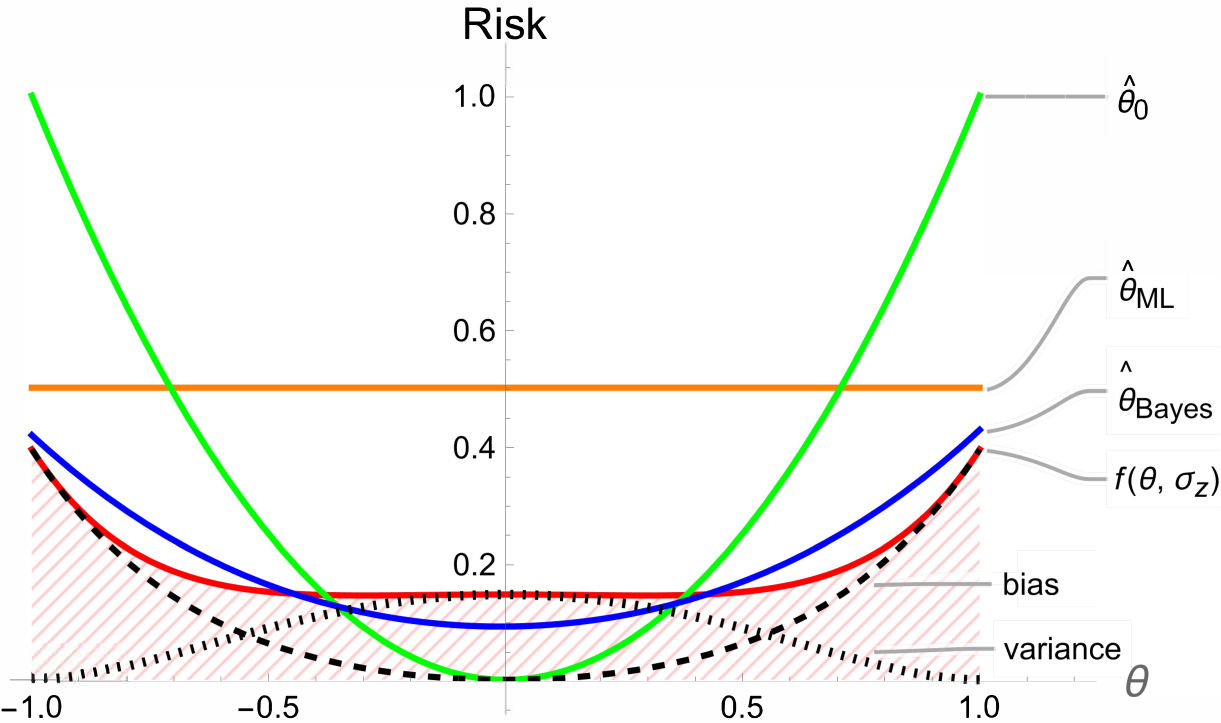}}
	\caption{Risks $R_\theta(\hat{\theta})$ of the three estimators over the interval $[-1,1]$: the average risks of $\hat{\theta}_{\text{ML}}$, $\hat{\theta}_{0}$, and $\hat{\theta}_{\text{Bayes}}$ are 1.0, 0.667, and 0.392, respectively. The lower bound, shown by the hatched red region, is 0.372. Moreover, the optimal squared bias and variance associated with $f\bigl(\theta,\sigma_z\bigr)$, the integrand in \eqref{Gaussian-uniform-prior-lower-bound}, are given.}
	\label{F1:MSE}
\end{figure}

Figure~\ref{F1:MSE} shows the pointwise risk $R_\theta(\hat{\theta})$ of three estimators above on $\mathcal{D}_\theta$, together with $f\bigl(\theta,\sigma_z\bigr)$, for $\sigma_z^2=0.5$ and $L = 1$. It can be observed that $\hat{\theta}_{\text{ML}}$ has a constant risk, and $\hat{\theta}_{\text{Bayes}}$ has the smallest average risk among the three estimators. Moreover, $\hat{\theta}_0$ is competitive only near $\theta=0$ and performs poorly elsewhere. It should be mentioned that \eqref{Gaussian-uniform-prior-lower-bound} lower bounds the area below any curve in Figure~\ref{F1:MSE} rather than having a pointwise lower bound on $R_\theta(\hat{\theta})$. In particular, an estimator may have risk below $f\bigl(\theta,\sigma_z\bigr)$ on part of the interval without causing any contradiction. However, its average risk over $\mathcal{D}_\theta$ cannot be smaller than the red hatched region in Figure~\ref{F1:MSE}, as illustrated by $\hat{\theta}_{0}$ and $\hat{\theta}_{\text{Bayes}}$. Consequently, reducing the pointwise risk on some subintervals must be compensated by increased risk elsewhere. We refer to this phenomenon as a ``waterbed'' effect, in analogy with the fundamental limitation of sensitivity shaping in classical control. Moreover, the two terms in $f\bigl(\theta,\sigma_z\bigr)$, i.e., the squared optimal bias $(b_\theta^*)^2$ and the term $\bigl((b_\theta^*)'+1\bigr)^2 \sigma_z^2$ which lower bounds the variance, are shown in Figure~\ref{F1:MSE}. Around the center of the parameter space, the squared bias is small; however, this comes at the price of a relatively large variance. In contrast, near the boundary of the parameter space, the variance decreases substantially, but only by allowing the squared bias to increase. Thus, the decrease of one term is accompanied by the increase of the other. This rise-fall behavior is exactly the bias-variance tradeoff.

    \vspace{-3mm}
    
\section{A Dynamical System Setting} \label{Sct4}

We are now prepared to consider a dynamical system setting. We denote the input and output of a dynamical system by $u(t)$ and $y(t)$, respectively. The observed output/input signal sequence that is available for data-driven control will be denoted by $Z^{N}=\{y(t),u(t)\}_{t=1}^N$. In a general form, the data is assumed to be generated as
\begin{subequations} \label{eq:general_dynamical_sysm}
	\begin{align}
		g_0(Z^0) &= 0, \\
        y(t) &= g_t(Z^{t-1})+e(t), \quad t \in \{1,2,\ldots, N\},
	\end{align}
\end{subequations}
where $g_t(\cdot)$ is a deterministic function, $\{e(t)\}_{t=1}^N$ is a sequence of independent random vectors with pdfs $p_t(e)$ and has a zero mean, and $Z^{t-1}$ represents all past output/input data up to the time $t-1$. The functions $g_t$ and the pdfs $p_t$ belong to a parametrized family
\begin{equation} \label{eq:Bayes_pdfs_family}
    \left\{\{g_t(\cdot;\theta),p_t(e;\theta)\}_{t=1}^{\infty}:\theta\in \mathcal{D}_\theta\right\},
\end{equation}
where $\theta$ is an unknown parameter, and $\mathcal{D}_\theta$ is an open set, i.e., there exists a $\theta\in \mathcal{D}_\theta$ such that $g_t(Z^{t-1})=g_t(Z^{t-1};\theta)$ and $p_t(e)=p_t(e;\theta)$. 

Using the chain rule of probability, together with the independence of the innovations, we obtain
\begin{equation} \label{eq:Bayes_pdfs}
p(Z^N;\theta)=\prod_{t=1}^{N} p_t(\varepsilon_t(Z^t;\theta);\theta),
\end{equation}
where $\varepsilon_t(Z^t;\theta):=y(t)-g_t(Z^{t-1};\theta)$ denotes the one-step prediction error induced by $\theta$. Using \eqref{eq:Bayes_pdfs} we can define the log-likelihood function and the score function, as
\begin{align}
	L_N(Z^N;\theta) &:= \log p(Z^N;\theta) = \sum_{t=1}^{N}\ell_t(Z^t;\theta), \\
    S_N(Z^N;\theta) &:= \frac{\partial}{\partial \theta} \log p(Z^N;\theta) = \sum_{t=1}^{N}s_t(Z^t;\theta),
\end{align}
where
\begin{equation*}
	\begin{split}
		\ell_t(Z^t;\theta) &:=\log p_t(\varepsilon_t(Z^t;\theta);\theta), \\
        s_t(Z^t;\theta) &:= \left.
\frac{\frac{\partial}{\partial \theta}p_t(e;\theta)+\frac{\partial}{\partial e}p_t(e;\theta)\Psi_t(Z^{t-1};\theta)}
{p_t(e;\theta)}
\right|_{e=\varepsilon_t(Z^t;\theta),}
	\end{split}
\end{equation*}
with $\Psi_t(Z^{t-1};\theta):=\frac{\partial}{\partial \theta}\varepsilon_t(Z^t;\theta)$.

To specify the conditions on $\{g_t\}$ and $\{p_t\}$, we will use the exponential forgetting framework in Ljung's seminal work \cite{Ljung1978convergence}. We will follow the version in \cite{Hjalmarsson1993aspects} which employs more flexible conditions than those in \cite{Ljung1978convergence}.

\begin{definition} [Exponentially Forgetting]  \label{Def:EF}
Let $\{z(t)\}_{t=1}^{\infty}$ be a stochastic process. A doubly indexed stochastic process $\{z_s^\circ(t), t\in T\subseteq\mathbb{N}\}$ is said to be a finite memory approximant of order $\gamma$ to $\{z(t)\}$, if $z_s^\circ(t)$ is independent of $\{z(k), k\le s\}$ and of $\{z_l^\circ(k), l\le k\le s\}$, $z_t^\circ(t)=0$, and there exists $C<\infty$ and $\tilde{\lambda}<1$, independent of $s$ and $t$, such that
\begin{equation}
\mathbb{E}\left[|z(t)-z_s^\circ(t)|^\gamma\right]^{1/\gamma}\le C\tilde{\lambda}^{t-s}.
\end{equation}
A stochastic process is said to be exponentially forgetting (EF) of order $\gamma$ if it has a finite memory approximant of order $\gamma$.
\end{definition}

According to the definition above, the remote past of an EF stochastic process is forgotten at an exponential rate. This is colloquially known as fading memory, a reasonable condition for most systems. 

Throughout this work, we impose the following assumption.
\begin{assumption} \label{Assmp_Fisher}
	The Fisher information matrix
	\begin{equation} \label{eq:FIM}
		I_{F,N}(\theta)
		:= \mathbb{E}_\theta\left[
		S_N(Z^N;\theta)S_N(Z^N;\theta)^\top
		\right]
	\end{equation}
	is well defined, i.e., $I_{F,N}(\theta)\succ 0$.
\end{assumption}

This assumption is standard in the literature \cite{Brown1990information,Lee2023fundamental}. In particular, it is satisfied when $\{y(t)\}_{t=1}^{\infty}$ in \eqref{eq:general_dynamical_sysm} is EF of order $q\cdot\gamma$, with $q>1$ and $\gamma>2$, and the family $\{p_t(\cdot;\theta):\theta\in \mathcal{D}_\theta,\, t\in T=\mathbb{N}_N\}$ satisfies some regularity conditions (see \cite{Brown1990information,Ben2009lower}).

Having specified the dynamical system setting and the data-generating mechanism, we now are ready to study the performance limits of data-driven control.

\vspace{-3mm}
\section{Risk Lower Bounds in Data-Driven Control} \label{Sct5}

In the point estimation problem reviewed in Section~\ref{Sct2}, the lower bound for the average risk $r(\hat{\theta})$ is established using the celebrated CRLB. In this section, we extend this idea to data-driven control, where the target is not the parameter $\theta$ itself, but a general function $h(\theta)$, representing the optimal model-based controller.

\vspace{-3mm}
\subsection{Lower Bound for the Pointwise Risk} \label{Sct5.1}

We have the following lower bound on the pointwise risk $R_\theta(\hat h)$ for data-driven control.
\begin{theorem} \label{Thm:CRLB}
	Let $\hat{h}(Z^N)$ satisfy $\mathbb{E}_\theta[\norm{\hat{h}(Z^N)}^2]<\infty$ for all $\theta\in D_\theta$, where the offline data $Z^N$ is generated from the dynamical system~\eqref{eq:general_dynamical_sysm}. Under Assumption~\ref{Assmp_Fisher} , we have that
	\begin{equation} \label{risk-lower-bound}
		R_\theta(\hat{h}(Z^N)) \geq \norm{b_\theta}_{W(\theta)}^2 + \Tr{W(\theta)m{'}(\theta)I_{F,N}^{-1}(\theta)(m{'}(\theta))^\top},
	\end{equation}
	where 
	\begin{equation} 
		m'(\theta) = \frac{\partial}{\partial \theta^\top}\mathbb{E}_\theta[\hat{h}(Z^N)]= b_\theta' + h'(\theta),
	\end{equation}
	where $b_\theta' \in \mathbb{R}^{n_h \times n_\theta}$ and $h' (\theta) \in \mathbb{R}^{n_h \times n_\theta}$ are the Jacobian of the bias $b_\theta$ and the optimal controller $h(\theta) $, respectively.
\end{theorem}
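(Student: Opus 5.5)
The plan is to start from the bias-variance decomposition \eqref{risk-decomposition}. It gives
\[
R_\theta(\hat h)=\norm{b_\theta}_{W(\theta)}^2+\Tr{W(\theta)\Var_\theta[\hat h]},
\]
so it suffices to establish the matrix Cram\'er--Rao inequality
\[
\Var_\theta[\hat h]\succeq m'(\theta)I_{F,N}^{-1}(\theta)(m'(\theta))^\top .
\]
Once this holds, multiplying by $W(\theta)^{1/2}$ on both sides and taking the trace preserves the ordering, because $W(\theta)\succ 0$ and $\Tr{W^{1/2}AW^{1/2}}\ge 0$ for $A\succeq 0$. The identity $m'(\theta)=b_\theta'+h'(\theta)$ is then immediate from differentiating $\mathbb{E}_\theta[\hat h]=b_\theta+h(\theta)$.

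To prove the variance inequality I would carry out the following steps in order.

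\textbf{Zero-mean score.} First I show that $\mathbb{E}_\theta[S_N]=0$. By the factorization \eqref{eq:Bayes_pdfs}, $S_N=\sum_t s_t$. Each $s_t$ has zero conditional mean given $Z^{t-1}$, which follows from differentiating $\int p_t(e;\theta)\,de=1$ together with the chain-rule form of $s_t$. Equivalently, one can differentiate $\int p(z;\theta)\,dz=1$ directly.

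\textbf{Covariance with the estimator.} Next I differentiate $m(\theta)=\int \hat h(z)p(z;\theta)\,dz$ under the integral sign. This gives
\[
m'(\theta)=\mathbb{E}_\theta[\hat h\,S_N^\top]=\mathbb{E}_\theta\big[(\hat h-m(\theta))S_N^\top\big],
\]
where the second equality uses the zero mean of the score.

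\textbf{Cauchy--Schwarz step.} Finally, for the stacked vector $\big((\hat h-m)^\top,\,S_N^\top\big)^\top$ the covariance matrix is positive semidefinite. Its Schur complement with respect to the block $I_{F,N}$, which is invertible by Assumption~\ref{Assmp_Fisher}, is exactly
\[
\Var_\theta[\hat h]-m'I_{F,N}^{-1}(m')^\top\succeq 0 .
\]
An equivalent route is to expand $\mathbb{E}\big[\|a^\top(\hat h-m)-a^\top m' I_{F,N}^{-1}S_N\|^2\big]\ge 0$ for arbitrary $a$.

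The main obstacle I expect is justifying the interchange of differentiation and integration, both in the zero-mean score step and in computing $m'(\theta)$, since $p(z;\theta)$ here comes from the dynamical recursion \eqref{eq:general_dynamical_sysm}.

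\textbf{Justifying the interchange.} I would handle this through the finite second moment $\mathbb{E}_\theta\|\hat h\|^2<\infty$ together with the regularity implied by Assumption~\ref{Assmp_Fisher} and the exponential-forgetting conditions cited after it. Via Cauchy--Schwarz, these give a dominating function for the difference quotients $\hat h(z)\big(p(z;\theta+\delta)-p(z;\theta)\big)/\delta$, so that dominated convergence applies. Here $\mathcal{D}_\theta$ being open is what permits the local perturbations $\theta+\delta$.

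I would state these regularity conditions explicitly as the standard ones referenced in \cite{Brown1990information,Ben2009lower}, rather than re-deriving them. This is the same level of rigor used for the classical CRLB. The remaining steps are routine linear algebra.
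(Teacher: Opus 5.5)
Your proposal is correct and follows essentially the same route as the paper's proof in Appendix~\ref{AppA_CRLB}. There, the bias--variance decomposition reduces the claim to the matrix Cram\'er--Rao inequality, which is obtained from the matrix Cauchy--Schwarz inequality with $q_\theta = S_N$ (equivalent to your Schur-complement step), using the zero-mean score and differentiation under the integral sign justified only by ``suitable regularity conditions,'' just as you propose; your explicit trace argument with $W(\theta)^{1/2}$ and the per-time-step zero-conditional-mean argument for the score merely spell out steps the paper leaves implicit.
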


\begin{proof}
	The proof relies on CRLB and is found in Appendix~\ref{AppA_CRLB}.
\end{proof}

For convenience of notation, define 
\begin{equation*}
	f(\theta,b_\theta,b_\theta') :=\norm{b_\theta}_{W(\theta)}^2 + \Tr{W(\theta)m{'}(\theta)I_{F,N}^{-1}(\theta)(m{'}(\theta))^\top} .
\end{equation*}
For a class of decision rules with the same bias $b_\theta$, $f(\theta,b_\theta,b_\theta')$ itself guarantees a lower bound for the pointwise risk $R_\theta(\hat h)$. For instance, if the decision rule is unbiased as in \eqref{eq:unbiased_ddc}, the lower bound reduces to the unbiased CRLB (UCRLB)
\begin{equation} \label{risk-lower-bound_unbiased}
	R_\theta(\hat{h}) \geq \Tr{W(\theta)h{'}(\theta)I_{F,N}^{-1}(\theta)(h{'}(\theta))^\top}.
\end{equation}
In principle, to obtain the tightest possible risk bound, we would like to minimize $f(\theta,b_\theta,b_\theta')$ over all possible bias function $b_\theta$. For every fixed value of $\bar\theta$, taking $b_\theta=h(\bar\theta) - h(\theta)$ drives $f(\theta,b_\theta,b_\theta')$ to zero at $\theta=\bar\theta$. This is achieved at $\bar\theta$ by the decision $\hat{h} = h(\bar\theta)$. Clearly, this decision rule is not useful for values other than $\bar\theta$. Furthermore, $\hat{h}$ optimizes the bound for any specific $\theta=\bar\theta$ but not for all values of $\theta$. Thus, in general we cannot minimize $f(\theta,b_\theta,b_\theta')$ pointwisely for all $\bar\theta$. A meaningful minimization of \eqref{risk-lower-bound} is possible only after restricting attentions to a prescribed class of bias functions. For example, one may consider linear bias functions, as what was done in \cite{Eldar2008rethinking} for point estimation, and optimize the corresponding parameters within that class. Such a bound, however, remains tied to the chosen bias class and therefore does not capture the fundamental limitations of decisions with arbitrary bias. This motivates passing from the pointwise risk $R_\theta(\hat{h})$ to the average risk $r(\hat{h})$, for which a global lower bound can be established.

\vspace{-3mm}
\subsection{Optimal-Bias Lower Bound for the Average Risk}  \label{Sct5.2}

Based on \eqref{risk-lower-bound}, a lower bound on the average risk $r(\hat h)$ can be obtained by minimizing its right-hand side over all admissible bias functions. To this end, define the functional
\begin{equation}
	F(b_\theta)
	:=
	\int_{D_\theta} f(\theta,b_\theta,b_\theta')\,d\theta .
\end{equation}
Then the corresponding functional minimization problem is
\begin{equation} \label{Bayes-risk-CoV-scalar}
	\min_{b_\theta} \quad F(b_\theta).
\end{equation}
The solution to \eqref{Bayes-risk-CoV-scalar} provides a lower bound on the average risk $r(\hat h)$ for any data-driven controller $\hat h$. Here, $F(b_\theta)$ is the cost functional, and $f(\theta,b_\theta,b_\theta')$ is the Lagrangian, viewed as a function of three independent variables $\theta$, $b_\theta$, and $b_\theta'$. The functional minimization problem in \eqref{Bayes-risk-CoV-scalar} is precisely a calculus of variations problem, a framework that has long played a central role in both point estimation \cite{Young1971error,Ben2009lower} and optimal control \cite{Liberzon2011calculus}. To the best of our knowledge, however, this is the first time such a tool has been used to derive fundamental limitations in data-driven control problems. Solving the variational problem \eqref{Bayes-risk-CoV-scalar} yields a sharp CRLB-based lower bound on the average risk, and determines the optimal bias that attains this bound. As in point estimation, the optimal bias makes the bias-variance tradeoff in data-driven control transparent by showing how risk is balanced between the two terms.

In order to characterize the solutions of this functional minimization problem through optimality conditions, we first introduce some necessary differentiability assumptions.
\begin{assumption} \label{Assp1}
	The Lagrangian $f(\theta,b_\theta,b_\theta')$ is twice continuously differentiable of its three arguments $\theta$, $b_\theta$ and $b_\theta'$.
\end{assumption}

\begin{assumption} \label{Assp2}
	The bias $b_\theta$ is twice continuously differentiable.
\end{assumption}

\begin{remark}
	The above assumptions ensure that all derivatives appearing in the subsequent analysis are continuous. Although they can be further relaxed, to streamline the presentation we do not attempt to state the weakest possible regularity conditions. For related discussions, see \cite{Dacorogna2024introduction,Ben2009lower}.
\end{remark}

Since the Fisher information $I_{F,N}(\theta)$ and the weighting $W(\theta)$ are positive definite, the Lagrangian function $f(\theta,b_\theta,b_\theta')$ is strictly convex in $(b_\theta,b_\theta')$ for every $\theta \in \mathcal{D}_\theta$. Then, we have the following theorem regarding the minimizer of the variational problem \eqref{Bayes-risk-CoV-scalar}.
\begin{theorem}  \label{Theorem-EL-Equation-vector}
	Under Assumptions~\ref{Assmp_Fisher}, \ref{Assp1} and \ref{Assp2}, the minimizer $b_\theta^*$ of the variational problem \eqref{Bayes-risk-CoV-scalar} is unique, and satisfies the Euler-Lagrange equation
	\begin{equation}\label{eq: general form of EL equation-vector}
	\frac{\partial f}{\partial [b_\theta]_{k}}-\sum_{l=1}^{n_\theta}\frac{\partial }{\partial \theta_l}\left(\frac{\partial f}{\partial [b_\theta']_{k,l}} \right)=0,\ k \in \{1,\cdots,n_h\},
	\end{equation}
	and the natural boundary condition
	\begin{equation} \label{eq:natural boundary condition-vector}
		\frac{\partial f}{\partial b_\theta'} \bm{n} (\theta)=0,\ \forall \theta\in \partial {\mathcal{D}}_{\theta},
	\end{equation}
	where $[b_\theta]_{k}$ and $\theta_l$ denote the $k$-th and the $l$-th component of $b_\theta$ and $\theta$,  $[b_\theta']_{k,l}$ denotes the component on the $k$-th row and the $l$-th column of $b_\theta'$, and $\bm{n}(\theta)$ is the outward unit normal vector to the boundary $\partial {\mathcal{D}}_{\theta}$.
\end{theorem}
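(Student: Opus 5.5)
The plan is the standard first-variation argument of the calculus of variations, combined with strict convexity of the Lagrangian to obtain uniqueness. Fix $\theta$ and write $f$ as a quadratic form in $(b_\theta,b_\theta')$:
\begin{equation*}
f(\theta,b_\theta,b_\theta') = b_\theta^\top W(\theta) b_\theta + \Tr{W(\theta)(b_\theta'+h'(\theta))I_{F,N}^{-1}(\theta)(b_\theta'+h'(\theta))^\top}.
\end{equation*}
Under Assumption~\ref{Assmp_Fisher} and $W(\theta)\succ 0$, the first term is strictly convex in $b_\theta$. The second term is a strictly convex quadratic in the entries of $b_\theta'$, because its Hessian is the Kronecker product $I_{F,N}^{-1}(\theta)\otimes W(\theta)\succ 0$. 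Hence $f(\theta,\cdot,\cdot)$ is jointly strictly convex, and so is the functional $F$ on the affine space of admissible biases (those satisfying Assumption~\ref{Assp2}).

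\textbf{Uniqueness.} Suppose $b^{(1)}$ and $b^{(2)}$ are both minimizers with $b^{(1)}\neq b^{(2)}$ on a set of positive measure. Evaluate $F$ at their midpoint. Strict convexity of the integrand gives a strict inequality on that set, so $F\bigl(\tfrac12 (b^{(1)}+b^{(2)})\bigr)<\min F$, a contradiction. By continuity of the biases, the two minimizers therefore agree everywhere.

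\textbf{First-order conditions.} Take the minimizer $b^*_\theta$ and an arbitrary $C^2$ perturbation $\eta(\theta)\in\mathbb{R}^{n_h}$ that is \emph{not} required to vanish on $\partial\mathcal{D}_\theta$, since the problem has no prescribed boundary values. Set $\phi(\epsilon)=F(b^*_\theta+\epsilon\eta)$. Assumption~\ref{Assp1} permits differentiation under the integral, and $\phi'(0)=0$ gives
\begin{equation*}
\int_{\mathcal{D}_\theta}\sum_{k}\Bigl(\frac{\partial f}{\partial [b_\theta]_k}\eta_k+\sum_{l}\frac{\partial f}{\partial [b_\theta']_{k,l}}\frac{\partial \eta_k}{\partial\theta_l}\Bigr)d\theta=0 .
\end{equation*}
Next, apply the divergence theorem to each vector field $\bigl(\eta_k\,\partial f/\partial [b_\theta']_{k,l}\bigr)_{l}$. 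This requires the integrands to be $C^1$, which follows from Assumptions~\ref{Assp1} and~\ref{Assp2}. The result is an interior integral of $\eta_k$ times the Euler--Lagrange expression in \eqref{eq: general form of EL equation-vector}, plus the boundary integral $\oint_{\partial\mathcal{D}_\theta}\eta^\top \frac{\partial f}{\partial b_\theta'}\bm{n}\,dS$.

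First choose $\eta$ compactly supported in the interior. The boundary term then vanishes, and the fundamental lemma of the calculus of variations (the interior integrand is continuous) yields \eqref{eq: general form of EL equation-vector} componentwise. Substituting this back leaves only the boundary integral, which must vanish for every admissible $\eta$. Since $\eta|_{\partial\mathcal{D}_\theta}$ is arbitrary, the boundary version of the fundamental lemma gives the natural boundary condition \eqref{eq:natural boundary condition-vector}.

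\textbf{Main obstacle.} The delicate point is not the computation but the regularity of the domain. The divergence theorem needs $\mathcal{D}_\theta$ to be bounded with a sufficiently regular (e.g.\ Lipschitz or piecewise $C^1$) boundary, so that $\bm{n}(\theta)$ is defined almost everywhere. The statement implicitly assumes this, and I would state it explicitly, in line with the remark deferring to \cite{Dacorogna2024introduction,Ben2009lower}.

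Existence of a minimizer is also implicit, since the theorem speaks of "the minimizer". I would take existence as given, or note that coercivity and convexity give it in a Sobolev space $H^1$. In addition, convexity shows that any $C^2$ solution of the Euler--Lagrange equation with the natural boundary condition is the global minimizer. This follows by integrating the convexity inequality $f(b+\eta)\ge f(b)+\langle\nabla f,(\eta,\eta')\rangle$ and using the same integration by parts, so the optimality conditions are also sufficient.
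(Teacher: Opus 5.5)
Your proposal is correct and follows essentially the same route as the paper's proof. Both use first variations that are free at the boundary, integrate by parts to obtain the Euler--Lagrange equation and the natural boundary condition, and derive uniqueness from strict convexity of $f$ in $(b_\theta,b_\theta')$. Your version is somewhat more complete in three respects: you treat the multivariate case directly via the divergence theorem, whereas the paper proves only the scalar interval case and calls the rest analogous; you first isolate the interior equation with compactly supported variations before extracting the boundary condition; and you prove uniqueness with a self-contained midpoint argument, whereas the paper cites \cite[Th.~2.1]{Dacorogna2024introduction}.
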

\begin{proof}
  See Appendix \ref{AppB_EL_Equation}.
\end{proof}

In general, the Euler-Lagrange equation~\eqref{eq: general form of EL equation-vector} in Theorem~\ref{Theorem-EL-Equation-vector} leads to a second-order ODE in $b_\theta$. In some simple settings the resulting ODE admits closed-form solutions, such as the solution \eqref{Gaussian-uniform-prior-lower-bound} for the point estimation in a ball \cite{Ben2009lower}; in most cases, it is solved numerically, e.g., using the MATLAB $\mathtt{pde}$ toolbox or Mathematica command $\mathtt{NDSolveValue}$. It should be emphasized that there is no guarantee that the solution $b_\theta^*$ of the Euler-Lagrange equation \eqref{eq: general form of EL equation-vector} is truly a bias function of a certain data-driven controller. We simply use it as an intermediary for the derivation of the lower bound.

We next apply the proposed approach to two data-driven control problems. This serves both to illustrate how the framework can be used in concrete settings and to assess the performance of several recently proposed data-driven controllers.
\vspace{-3mm}
\section{Case Study~I: Optimal Feedforward Control} \label{Sct6}

We begin with a simple feedforward control problem to illustrate the implications of our framework in a transparent setting. Although this example does not involve a dynamical system, it has bearings to state-of-the-art. In fact, the optimal feedforward control corresponds to one-step predictive control treated in \cite{Coulson2019data}. Moreover, its simplicity allows the fundamental limitations to be seen more clearly, thereby providing useful intuition for dynamical systems considered later.
\vspace{-3mm}
\subsection{Problem Formulation} \label{Sct6.1}

Consider a scalar model $y = \theta u + e$, where $e \sim \mathcal{N}(0, 1)$, and the domain of interest is given by $\mathcal{D}_\theta = [\theta_{\text{min}},\theta_{\text{max}}]$. Suppose that $r$ is the desired noise free output and that the objective is to design $u$ such that $y$ is close to $r$. However, we do not want to use a too large input $u$, we therefore use a quadratic cost to determine $u$, which is given by 
\begin{align*} \label{eq:feedforward_cost}
	J(u)= (r- y)^2 + \lambda u^2 = (r- \theta u)^2 + \lambda u^2,
\end{align*}
where the second term, which penalizes large inputs, can be tuned with the parameter $\lambda > 0$. The optimal controller is given by $u = h(\theta) = \frac{\theta}{\lambda + \theta^2}r$. Now define a quadratic cost for the data-driven controller, which is
 \begin{equation} \label{E5}
 	\begin{split}
 		L_{{\theta}}(\hat{h}) :=& J(\hat{h}) - J(h(\theta)) \\
        = &(r- \theta \hat{h})^2 + \lambda \hat{h}^2 - (r- \theta h(\theta))^2 - \lambda h^2(\theta) \\
 		= &W(\theta)\bigl(\hat h - h(\theta)\bigr)^2,
 	\end{split}	
\end{equation}
where we choose the weighting $W(\theta) = \lambda + \theta^2$. Based on the loss above, the corresponding risk $R_\theta(\hat h)$ and the average risk $r(\hat h)$ can be defined as in \eqref{risk} and \eqref{average_risk}, respectively. 

Based on design principles introduced in Section~\ref{Sct2}, we next develop several data-driven controllers for this problem, where data is given by the observation $z=y$, which is a realization of the random variable $Z \sim \mathcal{N}(\theta, 1)$. Moreover, the Fisher information $I_F = 1$.

\vspace{-3mm}
\subsection{Design of Data-Driven Feedforward Controllers}  \label{Sct6.2}

We consider four controllers for comparison. 

The first one is the CE controller $\hat{h}_{\text{CE}(\hat{\theta}_{\text{ML}})}$, obtained by replacing the true parameter $\theta$ in the optimal controller $h(\theta)$ with its ML estimator ${\hat \theta}_{\text{ML}} = z$, giving
\begin{equation} \label{Control-CEP}
	\hat{h}_{\text{CE}(\hat{\theta}_{\text{ML}})}(z) = h(\hat{\theta}_{\text{ML}}(z)) = \frac{z}{z^2+\lambda}r.
\end{equation}

The second one is the average risk tuning controller ${\hat h}_{A(P)}(z)$ in \eqref{eq:Bayes_control}, which is known to minimize the average risk $r(\hat h)$. To be specific, ${\hat h}_{A(P)}(z)$ is given by
\begin{equation}
   {\hat h}_{A(P)}(z) = \frac{ \int_{\theta_\text{min}}^{\theta_\text{max}} W(\theta)h(\theta) \exp\left( -\frac{(z-\theta)^2}{2} \right) d\theta }
	{ \int_{\theta_\text{min}}^{\theta_\text{max}} W(\theta) \exp\left( -\frac{(z-\theta)^2}{2} \right) d\theta }.
\end{equation}

The third one is the weighed average risk tuning controller ${\hat h}_{A(\pi)}(z)$, where the weighting $\pi(\theta)$ is the pdf of Gaussian distribution $\mathcal{N}(0,\tau^{-1})$, with $\tau > 0$. Then, it is easy to show that the controller ${\hat h}_{A(\pi)}(z)$ takes the form
\begin{equation} \label{Control-Pitman}
	\hat{h}_{A(\pi)}(z) = \frac{z}{z^2+\lambda+\frac{1}{1+\tau}}r,
\end{equation}
which corresponds to $h(\hat{\theta}_{\text{ML}}(z))$ where the penalty $\lambda$ in \eqref{Control-CEP} has been replaced by $\lambda+\frac{1}{1+\tau}$. We can interpret this as adding an information-dependent penalty on the control input: the less information is available, the larger the additional penalty is, i.e., a less certain control is more cautious.

Note that the controllers in \eqref{Control-CEP} and \eqref{Control-Pitman} both take the form
\begin{equation} \label{Control-structure}
\hat{h}_c(z) = \frac{z}{z^2+c}r,
\end{equation}
where $c \in \mathbb{R}$ is a hyperparameter to be tuned. A natural way to choose $c$ is to minimize the average risk of $\hat{h}_c(Z)$ over $D_\theta$ w.r.t. $c$, that is,
\begin{align*}
	\min_c \quad r(\hat h_c) := \int_{\theta_{\min}}^{\theta_{\max}} R_\theta(\hat h_c) d\theta.
\end{align*}
This minimization is carried out numerically and the resulting controller is used as the fourth controller, denoted by $\hat{h}_c$.

\vspace{-3mm}
\subsection{Calculating the Lower Bound for Feedforward Control}\label{Sct6.3}

We now derive the optimal-bias lower bound on the average risk $r(\hat h)$. Using the CRLB in Theorem~\ref{Thm:CRLB}, we have that 
\begin{align} \label{eq:LB_AV_feedforward}
	r(\hat h) \ge \int_{\theta_\text{min}}^{\theta_\text{max}} f\bigl(\theta,b_\theta,b_\theta'\bigr)d\theta,
\end{align}
where $f\bigl(\theta,b_\theta,b_\theta'\bigr) = b_\theta^2W(\theta)+W(\theta)(b_\theta' + \frac{\lambda - \theta^2}{(\lambda + \theta^2)^2}r)^2$. For this scalar case, the Euler-Lagrange equation \eqref{eq: general form of EL equation-vector} reduces to
\begin{equation} \label{eq: general form of EL equation-scalar}
	b_\theta W(\theta)-\frac{d}{d\theta}\left(W(\theta)\left(b_\theta'(\theta)+\frac{\lambda - \theta^2}{(\lambda + \theta^2)^2}r\right)\right) = 0,
\end{equation}
and the natural boundary condition \eqref{eq:natural boundary condition-vector} reduces to
\begin{equation}  \label{eq: natural_BC-scalar}
\begin{split}
   W(\theta)\left(b_\theta'(\theta)+\frac{\lambda - \theta^2}{(\lambda + \theta^2)^2}r\right)|_{\theta_{\min}}
	&= 0, \\ 
W(\theta)\left(b_\theta'(\theta)+\frac{\lambda - \theta^2}{(\lambda + \theta^2)^2}r\right)|_{\theta_{\max}}
	&= 0,
\end{split}
\end{equation}
respectively. By numerically solving the Euler-Lagrange equation \eqref{eq: general form of EL equation-scalar} subject to the boundary condition \eqref{eq: natural_BC-scalar}, we obtain both the lower bound for the average risk and the optimal bias that attains this bound. The results are shown in the sequel.
\vspace{-3mm}
\subsection{Simulation Results}\label{Sct6.4}

We now evaluate the performance of data-driven controllers introduced above against the lower bound for the average risk. We fix $\lambda = 5$, the reference $r=1$, and let $\theta$ vary over $\mathcal{D}_\theta = [0,2]$. For the third controller $\hat{h}_{A(\pi)}(z)$, we take $\tau = 1$ in the weighting $\pi(\theta)$. For the fourth controller $\hat{h}_c(z)$, using a numerical minimization procedure, we obtain the optimal value $c = 7.31$. For each fixed $\bar\theta$ in $\mathcal{D}_\theta$, we perform 100 Monte Carlo simulations to approximate the pointwise risk $R_\theta(\hat{h})$ for each controller.

\begin{figure}
	\centering
	\includegraphics[scale=0.45]{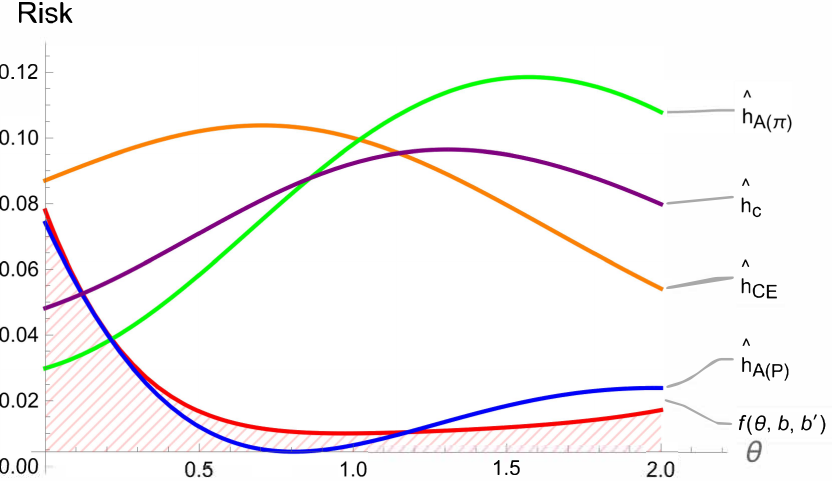}
	\caption{Risks of the four controllers. The average risks of $\hat{h}_{\text{CE}}$, $\hat{h}_{A(\pi)}$, $\hat{h}_{c}$ and $\hat{h}_{A(P)}$ are 0.178, 0.173, 0.163, and 0.039 respectively. The lower bound (hatched red region) is 0.038.}
	\label{fig:ex1-bayes-risks}
\end{figure}
Figure~\ref{fig:ex1-bayes-risks} compares the risks of four controllers with the derived lower bound for the average risk over $\mathcal{D}_\theta$. The red curve labeled $f(\theta,b_\theta,b_\theta')$ represents the value of the Lagrangian after substituting the optimal bias function $b_\theta^*$ and its derivative $(b_\theta^*)'$ into $f(\theta,b_\theta,b_\theta')$. As in the preview Figure~\ref{F1:MSE} of point estimation, Figure~\ref{fig:ex1-bayes-risks} shows that no single controller is uniformly best over the entire parameter domain $\mathcal{D}_\theta$. Instead, each design exhibits a different compromise between local and global performance. Among the four controllers, $\hat h_{A(P)}$ performs most closely to the theoretical lower bound. In particular, $\hat h_{A(P)}$ achieves the smallest average risk and is especially effective for moderate and large values of $\theta$, although its performance deteriorates as $\theta \to 0$. The remaining controllers display more pronounced departures from the lower bound. The controller $\hat h_c$ improves upon both $\hat h_{\text{CE}}$ and $\hat h_{A(\pi)}$ in terms of average risk, but still stays noticeably above the lower bound over much of the parameter space. Meanwhile, $\hat h_{\text{CE}}$ and $\hat h_{A(\pi)}$ each perform relatively better over some subregions of $\mathcal{D}_\theta$, at the price of substantially worse behavior elsewhere. Moreover, $f(\theta,b_\theta^*,(b_\theta^*)')$ should not be interpreted as a pointwise lower bound on the risk $R_\theta(\hat{h})$ but a global constraint. Therefore, there is no contradiction for a controller to have risk below $f(\theta,b_\theta^*,(b_\theta^*)')$ at some specific values of $\theta$ without violating the bound. For instance, the risk of average risk tuning controller $\hat h_{A(P)}$ is smaller than the lower bound on the subinterval $[0.5,1]$, yet when integrated over the total interval $[0,2]$, its average risk is still no smaller than the total area under the lower bound in \eqref{eq:LB_AV_feedforward}.

\begin{figure}
	\centering
	\includegraphics[scale=0.45]{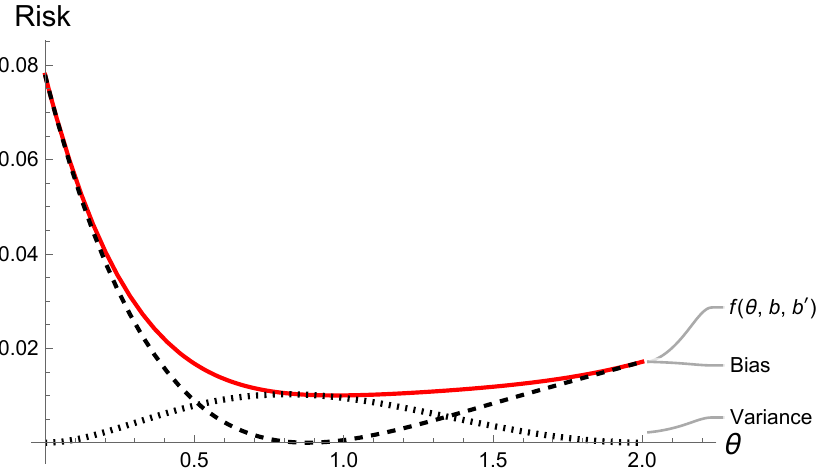}
	\caption{Bias-variance tradeoff in feedforward control}
	\label{fig:ex1-bayes-tradeoff}
\end{figure}

Figure~\ref{fig:ex1-bayes-tradeoff} shows the bias component $(b_\theta^*)^2W(\theta)$ and the surrogate variance component $W(\theta)\bigl((b_\theta^*)' + h'(\theta)\bigr)^2 I_F^{-1}$ appearing in $f(\theta,b_\theta^*,(b_\theta^*)')$. These two components exhibit the same rise-fall behavior observed in the point estimation example (see Figure~\ref{F1:MSE}). In regions where the variance lower bound decreases, the squared bias term increases, and vice versa. This clearly shows that there is also a  bias-variance tradeoff in data-driven control.

Before leaving this section, let us remark that, even if extremely simplified, this case study illustrates a fundamental trade-off in data-driven control: any reduction in risk relative to the lower bound over one region of the parameter space must be accompanied by an increase in risk elsewhere, analogous to the ``waterbed'' effect we introduced for point estimation. They also highlight the importance of evaluating how controller performance varies across the parameter space. Comparisons with several representative controllers show that the average risk minimization method is especially effective in approaching the lower bound, in line with recent Bayesian methods \cite{Chiuso2025harnessing,Schwaller2026bayesian}.

	\vspace{-3mm}
	\section{Case Study II: LQR} \label{Sct7}

We now show how our framework can be applied in data-driven control of dynamical systems. In particular, we consider data-driven LQR, a benchmark problem that has received considerable attention in recent years \cite{De2019formulas,De2021low,Dorfler2022bridging,Dorfler2023certainty,Zeng2025noise,Zhao2025data,Zhao2025regularization,Schwaller2026bayesian}.
    \vspace{-3mm}
	\subsection{Problem Formulation} \label{Sct7.1}
Given a discrete-time linear time-invariant (LTI) system
\begin{equation}\label{eq:LTI-system}
   x_{k+1} = A x_k + B u_k + e_k,
\end{equation}
where $x_{k}\in \mathbb{R}^{n_x}$, $u_{k}\in \mathbb{R}^{n_u}$, and $e_{k}\in \mathbb{R}^{n_x}$ are the system state, input and noise at time $k$, respectively. We assume that $(A, B)$ is stabilizable and $A$ is Schur stable.

The standard LQR problem for the system \eqref{eq:LTI-system} is
\begin{equation} \label{eq:LQR_problem}
\begin{split}
   \min_{\{u_k\}} \  & \lim_{T \to \infty} \mathbb{E}\left[\frac{1}{T} \sum_{k=0}^T (x_k^\top Qx_k + u_k^\top Ru_k)\right], \\
   \text{s.t.} \   & x_{k+1} = A x_k + B u_k + e_k,
\end{split}
\end{equation}
where the weighting matrices $Q \succ 0$, $R \succ 0$, and the expectation is over the randomness from the initial state $x_0$ and the noise $e_k$.  When system matrices $A$ and $B$ are known, this LQR problem \eqref{eq:LQR_problem} can be solved by finding the positive definite solution $P$ to the discrete-time algebraic Riccati equation
\begin{equation} \label{eq:LQR_Riccati}
   P = A^TPA - A^TPB (R + B^TPB)^{-1} B^TPA + Q.
\end{equation}
Then, the solution of \eqref{eq:LQR_problem} is $u_k = -K x_k$ with the optimal state feedback gain given by
\begin{equation} \label{eq:LQR_feedback_gain}
K = - (R + B^TPB)^{-1} B^TPA.
\end{equation}
Alternatively, following \cite{Feron1992Numerical} this optimal controller can be determined by solving the following program:
\begin{equation} \label{eq:LQR_SDP}
\begin{split}
   \min_{P \succcurlyeq I,K} \  & J(K) = \Tr{QP+K^\top RKP}, \\
   \text{s.t.} \   & (A+BK)P(A+BK)^\top - P + I \preccurlyeq 0,
\end{split}
\end{equation}
which can be further cast into a convex SDP after a change of variables; see \cite[Section III-C]{Dorfler2023certainty}.

The data-driven LQR problem considers the case that system matrices $A$ and $B$ are unknown, while a batch of data generated by the system is available. In this work, we assume that the data is generated as follows: $e_k \sim \mathcal{N}(0, \sigma_e^2 I_{n_x})$ are independent and identically distributed (i.i.d.), $u_k \sim \mathcal{N}(0, \sigma_u^2 I_{n_u})$ are i.i.d., and $x_0 \sim \mathcal{N}(0, \sigma_0^2 I_{n_x})$. In addition, $x_0$, $\{e_k\}$ and $\{u_k\}$ are mutually independent. Let us introduce the data matrices constructed from a trajectory collected over a horizon $N$:
\begin{subequations}\label{eq:LQR_data}
\begin{align}
X_{0}^N &= \begin{bmatrix}x_0&x_1&\cdots&x_{N-1}\end{bmatrix} \in \mathbb{R}^{n_x \times N}, \\
U_{0}^N &= \begin{bmatrix}u_0&u_1&\cdots&u_{N-1}\end{bmatrix} \in \mathbb{R}^{n_u \times N}, \\
X_{1}^N &= \begin{bmatrix}x_1&x_2&\cdots&x_{N}\end{bmatrix} \in \mathbb{R}^{n_x \times N}.
\end{align}
\end{subequations}

Before introducing the controller design, we formulate the data-driven LQR into the general framework in Section~\ref{Sct2}. To this end, we first introduce the following lemma.
\begin{lemma} [{\cite[Lemma 3]{Mania2019certainty}}]
Let $\hat K$ be an arbitrary stabilizing linear controller. Denote $\Sigma_{K} \succ 0$ and $\Sigma_{\hat K} \succ 0$ the covariance matrices of the stationary state of the closed-loop systems $A-BK$ and $A-B\hat K$, satisfying the Lyapunov equations
\begin{subequations}
\begin{align}
    \Sigma_{K} &= (A-BK)\Sigma_{K}(A-BK)^\top + \sigma_e^2I_{n_x}, \\
	\Sigma_{\hat K} &= (A-B\hat K)\Sigma_{\hat K}(A-B\hat K)^\top + \sigma_e^2I_{n_x},
\end{align}
\end{subequations}
respectively. Moreover, let $J(K)$ denote the cost in \eqref{eq:LQR_problem} corresponding to the optimal controller $K$ applied to the system $(A,B)$, and let $J(\hat K)$ denote the corresponding cost when the controller $\hat K$ is applied. Then, for suboptimality gap $\Delta_J(\hat K) =  J(\hat K) - J(K)$ we have
\begin{equation}  \label{eq:LQR_cost_difference}
	\Delta_J(\hat K) = \text{Tr}\left(\Sigma_{\hat K}(\hat{K}-K)^\top(R+B^\top PB)(\hat{K}-K)\right).
\end{equation}
\end{lemma}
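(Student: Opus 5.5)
The plan is a direct algebraic verification based on the stationary covariance $\Sigma_{\hat K}$ and the Riccati equation \eqref{eq:LQR_Riccati}. I will use the convention under which the lemma is stated: the closed loop is $A-BK$. The optimal gain is then $K=M^{-1}B^\top PA$ with $M:=R+B^\top PB\succ 0$, so that $B^\top PA = MK$. The sign in \eqref{eq:LQR_feedback_gain} is just the other convention, $u_k=+Kx_k$, and does not affect the argument.

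\emph{Step 1: express both costs through traces.} Since $\hat K$ is stabilizing, the closed loop $x_{k+1}=(A-B\hat K)x_k+e_k$ has a unique stationary covariance $\Sigma_{\hat K}$, given by the stated Lyapunov equation. The time-averaged cost in \eqref{eq:LQR_problem} therefore equals
\begin{equation*}
J(\hat K)=\Tr{(Q+\hat K^\top R\hat K)\Sigma_{\hat K}}.
\end{equation*}
For the optimal gain I will use the standard identity $J(K)=\sigma_e^2\Tr{P}$. It follows because the Riccati equation can be rewritten as the closed-loop Lyapunov equation $P=(A-BK)^\top P(A-BK)+Q+K^\top RK$. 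Taking the trace against $\sigma_e^2 I$ and using the Lyapunov equation for $\Sigma_K$ gives $\Tr{(Q+K^\top RK)\Sigma_K}=\sigma_e^2\Tr{P}$.

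\emph{Step 2: transfer $\sigma_e^2\Tr{P}$ onto $\Sigma_{\hat K}$.} Substitute $\sigma_e^2 I=\Sigma_{\hat K}-(A-B\hat K)\Sigma_{\hat K}(A-B\hat K)^\top$ and use cyclicity of the trace. This yields
\begin{equation*}
\sigma_e^2\Tr{P}=\Tr{\Sigma_{\hat K}\bigl(P-(A-B\hat K)^\top P(A-B\hat K)\bigr)}.
\end{equation*}
Hence $\Delta_J(\hat K)=\Tr{\Sigma_{\hat K}\,G}$, where
\begin{equation*}
G:=Q+\hat K^\top R\hat K-P+(A-B\hat K)^\top P(A-B\hat K).
\end{equation*}

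\emph{Step 3: complete the square.} Expand $(A-B\hat K)^\top P(A-B\hat K)$ and replace $B^\top PA$ by $MK$. Then
\begin{equation*}
\hat K^\top R\hat K+(A-B\hat K)^\top P(A-B\hat K)=A^\top PA+\hat K^\top M\hat K-\hat K^\top MK-K^\top M\hat K,
\end{equation*}
and the right-hand side equals $A^\top PA-K^\top MK+(\hat K-K)^\top M(\hat K-K)$. The Riccati equation \eqref{eq:LQR_Riccati}, written as $P=A^\top PA-K^\top MK+Q$, then cancels all remaining terms. We are left with $G=(\hat K-K)^\top M(\hat K-K)$, which is exactly \eqref{eq:LQR_cost_difference}.

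The only delicate points are bookkeeping rather than analysis. One is fixing the sign convention consistently, since the paper's \eqref{eq:LQR_feedback_gain} and the lemma use opposite signs. The other is justifying $J(K)=\sigma_e^2\Tr{P}$, which needs $A-BK$ to be Schur so that the Lyapunov solutions are unique and the limit in \eqref{eq:LQR_problem} is the stationary cost. Stabilizability of $(A,B)$ together with $Q\succ 0$ guarantees this. I expect this identification step, rather than the algebra, to be where care is needed.
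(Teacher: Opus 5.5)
Your proof is correct and complete; each expansion checks out. The identity $J(K)=\sigma_e^2\Tr{P}$, substituting the Lyapunov equation for $\Sigma_{\hat K}$ into $\sigma_e^2\Tr{P}$, and completing the square with $B^\top PA=MK$ and the Riccati equation together give exactly \eqref{eq:LQR_cost_difference}.

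The paper gives no proof of its own. It imports the result from \cite{Mania2019certainty}, so your argument fills in what the paper leaves to the reference.

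For comparison, the policy-gradient literature (e.g.\ \cite{Fazel2018global}) usually takes a different route: a cost-difference identity valid for any two stabilizing gains. Let $P_K$ solve $P_K=Q+K^\top RK+(A-BK)^\top P_K(A-BK)$ and set $E_K=(R+B^\top P_KB)K-B^\top P_KA$. Then
\[
J(\hat K)-J(K)=2\,\Tr{\Sigma_{\hat K}(\hat K-K)^\top E_K}+\Tr{\Sigma_{\hat K}(\hat K-K)^\top(R+B^\top P_KB)(\hat K-K)},
\]
and the lemma follows because $E_K=0$ at the optimal gain. Your Step 3 is the same computation, with optimality ($B^\top PA=MK$, i.e.\ $E_K=0$) used from the start, which makes it shorter. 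The general form is what one needs when $K$ is not optimal, e.g.\ in gradient-domination arguments.

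Your treatment of the sign convention is also right. \eqref{eq:LQR_feedback_gain} together with $u_k=-Kx_k$ is inconsistent with the closed loop $A-BK$ used in the lemma. However, the right-hand side of \eqref{eq:LQR_cost_difference} is quadratic in $\hat K-K$, so the result holds under either convention as long as both gains follow the same one.

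Finally, your argument does not use the paper's standing assumption that $A$ is Schur. Stabilizability of $(A,B)$ with $Q\succ 0$ (which makes $A-BK$ Schur) and stability of $A-B\hat K$ suffice.
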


The lemma above quantifies the suboptimality gap in terms of the controller mismatch $\hat{K}-K$. Note that \eqref{eq:LQR_cost_difference} can be rewritten as
\begin{equation}  \label{eq:LQR_cost_difference_rewritten}
	\begin{split}
 &\Delta_J(\hat K) =\text{Tr}\left(\Sigma_{K}(\hat{K}-K)^\top(R+B^\top PB)(\hat{K}-K)\right)  + \\ 
&\text{Tr}\left((\Sigma_{\hat K}-\Sigma_{K})(\hat{K}-K)^\top(R+B^\top PB)(\hat{K}-K)\right).
	\end{split}
\end{equation}
To apply our theory, we aim at a loss that is quadratic as in \eqref{loss}. According to \cite[Lemma 6]{Fazel2018global}, when $\hat{K}$ is sufficiently close to $K$, we have
\begin{equation} 
 	\Sigma_{\hat K} \approx \Sigma_{K} + \mathcal{O}(\norm{\hat{K}-K}),
\end{equation}
where $\norm{\hat{K}-K}$ denotes the spectral norm of $\hat{K}-K$. Then, $\text{Tr}\left(\Sigma_{K}(\hat{K}-K)^\top(R+B^\top PB)(\hat{K}-K)\right)$ is the leading-order term in \eqref{eq:LQR_cost_difference_rewritten}, and the remaining term is neglectable compared to it. Motivated by this, we define the following loss function as a surrogate of $\Delta_J(\hat K)$ in \eqref{eq:LQR_cost_difference_rewritten} for the data-driven LQR problem:
\begin{equation} \label{E5}
 	\begin{split}
 		L_{{\theta}}(\hat{h}) &= \text{Tr}\left(\Sigma_{K}(\hat{K}-K)^\top(R+B^\top PB)(\hat{K}-K)\right) \\
         &= \text{Vec}(\hat{K}-K)^\top \left(\Sigma_{K} \otimes (R+B^\top PB)\right)\text{Vec}(\hat{K}-K) \\
		 &= (\hat{h}-h(\theta))^\top W(\theta)(\hat{h}-h(\theta)),
 	\end{split}	
\end{equation}
where $\theta = \text{Vec}(\begin{bmatrix}A & B\end{bmatrix}) \in \mathbb{R}^{n_x(n_x+n_u)}$ denotes the vectorization of matrices $A$ and $B$ by column, $h(\theta) = \text{Vec}(K)\in \mathbb{R}^{n_xn_u}$, and $\hat h = \text{Vec}(\hat K)\in \mathbb{R}^{n_xn_u}$. Moreover, $W(\theta) = \Sigma_{K} \otimes (R+B^\top PB) \in \mathbb{R}^{n_xn_u \times n_xn_u}$ is the Kronecker product of $\Sigma_{K}$ and $(R+B^\top PB)$. Based on the above quadratic loss taking the role of \eqref{loss}, the pointwise risk $R_\theta(\hat h)$ and average risk $r(\hat h)$ are analogously defined as in \eqref{risk} and \eqref{average_risk}, respectively. It should be mentioned that the risk $R_\theta(\hat h)$ quantifies the expected long-run performance degradation of a data-driven controller $\hat h$. This differs from the viewpoint commonly adopted in some literature \cite{De2019formulas,Van2020data,De2021low,Dorfler2023certainty}, where guarantees are typically expressed directly in terms of a realized $\Delta_J(\hat K)$. 

Having formulated the data-driven LQR problem within the framework of Section~\ref{Sct2}, we now turn to the design of data-driven LQR controllers.
\vspace{-3mm}
\subsection{Design of Data-Driven LQR} \label{Sct7.2}

By the linear dynamics \eqref{eq:LTI-system}, system matrices can be expressed in a linear regression form
\begin{equation}\label{eq:LQR_linear_regression}
	X_{1}^N = A X_{0}^N + B U_{0}^N + E_{0}^N = \begin{bmatrix} B & A \end{bmatrix} Z^N + E_{0}^N,
\end{equation}
where $Z^N = [(U_{0}^N)^\top \ (X_{0}^N)^\top]^\top$, and $E_{0}^N$ stacks the noise sequence, defined analogously to the data matrices in \eqref{eq:LQR_data}. Throughout this section, we assume that the data matrix $Z^N$ is full row rank, which is a necessary condition for data-driven LQR \cite{Van2020data}.

We first consider the indirect CE controller. The usual ML estimator for system matrices is given by
\begin{equation} \label{eq:MLE}
\hat\theta_{\text{ML}} := {\begin{bmatrix}\hat B &\hat A \end{bmatrix} }_{\text{ML}} = X_1^N (Z^N)^\top \bigl(Z^N (Z^N)^\top\bigr)^{-1}.
\end{equation}
By replacing the unknown parameter $\theta$ in the Riccati equation \eqref{eq:LQR_Riccati} with $\hat\theta_{\text{ML}} $, solving for the corresponding matrix $\hat P$, and then substituting $\hat P$ into \eqref{eq:LQR_feedback_gain}, we obtain the CE controller $\hat h_{\text{CE}(\hat\theta_{\text{ML}})} = h(\hat\theta_{\text{ML}})$. According to \cite{Hjalmarsson2005experiment}, given that $\hat\theta_{\text{ML}}$ is consistent and asymptotically efficient, $h(\hat\theta_{\text{ML}})$ is also asymptotically efficient. Moreover, recent works have provided finite-sample analysis for $\hat\theta_{\text{ML}}$ \cite{Jedra2022finite,Simchowitz2018learning} and $h(\hat\theta_{\text{ML}})$ \cite{Mania2019certainty,Dean2020sample}. 

The second family of controllers we introduce is the so called direct data-driven LQR methods, which use data to synthesize a controller without explicitly identifying system matrices. There are many variants of direct data-driven LQR in the recent literature; see, for example, \cite{De2019formulas,De2021low,Van2020data,Van2020noisy,Dorfler2023certainty,Zhao2025regularization,Schwaller2026bayesian}. We here highlight several representative milestones for comparison. 

One of the earliest methods in this line of work appears in \cite{De2019formulas,Van2020data}, which is able to recover the optimal gain $K$ exactly in the absence of noise. However, recent work \cite[Th.~2]{Zeng2025noise} shows that the resulting controller is not a consistent estimator of $K$ under noisy data unless suitably regularized with a data-size dependent regularization coefficient. It was later shown in \cite{Dorfler2023certainty} that the indirect and direct methods can be viewed within a unified framework, which leads to the following regularized formulation:
 \begin{equation} \label{eq:LQR_SDP_Reg}
    \begin{array}{cl}
        \min_{X, Y} & \Tr{Q X_0^N Y} + \Tr{X} + \lambda_1 \norm{\Pi Y}, \\
        \text {s.t. } & 
        \left[\begin{array}{cc}
        X_{0}^N Y - I_{n_x} &  X_{1}^N  Y \\
        Y^{\top} X_{1}^{\top} & X_{0}^N Y
        \end{array}\right] \succcurlyeq 0\\
        &\left[\begin{array}{cc}
        X &  R^{\frac{1}{2}}U_0^N Y  \\
        \left(R^{\frac{1}{2}} U_0^N Y \right)^{\top} & X_0^N Y
        \end{array}\right] \succcurlyeq 0,
    \end{array}
\end{equation}
where $\Pi = I_N - (Z^N)^\dagger Z^N$. Let $Y_R$ be the optimal solution of \eqref{eq:LQR_SDP_Reg}. The resulting controller is then given by
\begin{equation}\label{eq:LQR_SDP_Reg_sol}
\hat{h}_{R} := \hat{K}_{R} = -U_0^N Y_R (X_0^N Y_R)^{-1}.
\end{equation} 
According to \cite{Dorfler2023certainty}, the regularization term $ \lambda_1 \norm{\Pi Y}$ reflects a least-squares data-fitting criterion. When $\lambda_1 = 0$, \eqref{eq:LQR_SDP_Reg} reduces to the method in \cite[Th.~4]{De2019formulas}; when the regularization parameter $\lambda_1$ is sufficiently large, it coincides with $h(\hat\theta_{\text{ML}})$. 

Recently, a covariance-parameterized LQR formulation was proposed in \cite{Zhao2025data}, followed by a regularized variant in \cite{Zhao2025regularization}, where the regularization term accounts for uncertainty in both the steady-state covariance and the LQR objective. Very recently, using the same covariance parameterization as in \cite{Zhao2025regularization}, a direct Bayesian LQR method was proposed in \cite{Schwaller2026bayesian} following the procedure in \cite{Chiuso2025harnessing}. In this approach, posterior uncertainty is propagated into the control design through a variance-based regularization term. Specifically, system matrices $B$ and $A$ are assigned a matrix-normal prior $\pi$ with mean $\bar B$ and $\bar A$ and covariance $\Omega^{-1}$. Given the data $Z^N$, the posterior distribution of $A$, $B$ and the closed-loop $A - B\hat{K}$ remains matrix normal \cite[Lemma~1]{Schwaller2026bayesian}. This leads to the following SDP formulation \cite[Proposition~1]{Schwaller2026bayesian}:
\begin{equation}\label{eq:cov_convex_form_Bayes}
\begin{aligned}
\min_{\Sigma,S,L,M}\quad
& \Tr{Q\Sigma} + \Tr{RL} + \lambda_2 \Tr{M\Psi}, \\
\text{s.t.}\quad
& \Psi_2 S = \Sigma, \ \begin{bmatrix}
\Sigma - \sigma_e^2I_{n_x} & \bar X_1 S \\
S^\top \bar X_1^\top & \Sigma
\end{bmatrix} \succcurlyeq 0, \\
& \begin{bmatrix}
L & \Psi_1 S \\
S^\top \Psi_1^\top & \Sigma
\end{bmatrix} \succcurlyeq 0, \ \begin{bmatrix}
M & S \\
S^\top & \Sigma
\end{bmatrix} \succcurlyeq 0,
\end{aligned}
\end{equation}
where
\begin{equation}
  \begin{split}
     \Psi
     &=
     \frac{Z^N(Z^N)^\top + \sigma_e^2\Omega}{N}
     =
     \begin{bmatrix}
     \Psi_1\\
     \Psi_2
     \end{bmatrix}, \\
     \bar{X}_1
     &=
     \frac{X_1^N(Z^N)^\top + \sigma_e^2[\bar{B} \ \bar{A}]\Omega}{N}.
  \end{split}
\end{equation}
Here, $\Psi_1\in\mathbb{R}^{n_u\times(n_x+n_u)}$ and
$\Psi_2\in\mathbb{R}^{n_x\times(n_x+n_u)}$ are state-input covariances. Let $S_\pi$ and $\Sigma_\pi$ denote the optimal solutions to \eqref{eq:cov_convex_form_Bayes}. The resulting controller is then given by
\begin{equation}\label{eq:LQR_SDP_COV_Reg_sol}
\hat{h}_{A(\pi)}
:=\hat K_\pi=-\Psi_1 S_\pi \Sigma_\pi^{-1}.
\end{equation}

Compared with \eqref{eq:LQR_SDP_Reg}, the formulation \eqref{eq:cov_convex_form_Bayes} has decision variables whose dimensions are independent of the data length $N$, and therefore enjoys better computational scalability. Moreover, numerical simulations in \cite{Schwaller2026bayesian} indicate that the Bayesian direct LQR method \eqref{eq:cov_convex_form_Bayes} achieves a lower median optimality gap and a higher closed-loop stability rate than existing approaches.

The third type of controllers we consider is the average risk tuning controller ${\hat h}_{A(P)}(Z^N)$ in \eqref{eq:Bayes_control}, which is given by
\begin{equation} \label{eq:Bayes_control_LQR}
    \hat h_{A(P)}(Z^N) =\overline W^{-1}(Z^N) \overline H(Z^N),
\end{equation}
where 
\begin{equation*}
	\begin{split}
		\overline W(Z^N) &=\int_{\mathcal{D}_\theta} W(\theta)p(Z^N;\theta)d\theta, \\
		\overline H(Z^N) &=\int_{\mathcal{D}_\theta} W(\theta)h(\theta)p(Z^N;\theta)d\theta.
	\end{split}
\end{equation*}
The pdf $p(Z^N;\theta)$ can be decomposed using the Markovian property
\begin{equation}
   p(Z^N;\theta) = p(x_0) \prod_{k=0}^{N-1} p(x_{k+1} | x_k, u_k; \theta),
\end{equation}
where the conditional distribution of $x_{k+1}$ is
\begin{equation}
   (x_{k+1} | x_k, u_k; \theta) \sim \mathcal{N}(A x_k + B u_k, \sigma_e^2 I_{n_x}).
\end{equation}
In the implementation, the integrals $\overline W(Z^N)$ and $\overline H(Z^N)$ are approximated using  grid-based integration over the feasible set $\mathcal{D}_\theta$. To be specific, a fixed grid of candidate models $(A_i,B_i)$ is constructed within $\mathcal{D}_\theta$. For each valid grid point, we precompute the corresponding LQR gain $K_i$ and weighting term $W_i$. Given the observed data $Z^N$, the likelihood $p(Z^N;\theta_i)$ is evaluated for each candidate model $\theta_i=(A_i,B_i)$, and the average-risk controller is computed as the resulting likelihood-weighted average. 

The controllers $\hat h_{A(\pi)}$ in \eqref{eq:LQR_SDP_COV_Reg_sol} and $\hat h_{A(P)}$ in \eqref{eq:Bayes_control_LQR} can be compared within the average risk minimization (Bayes) framework introduced in Section~\ref{Sct2.1}. This comparison is postponed to the end of this section, where simulation results are presented to visualize the effect of the prior choice.
\vspace{-3mm}
\subsection{Calculating the Lower Bound for LQR} \label{Sct7.3}

We now apply the methodology developed in Section~\ref{Sct5} to derive lower bounds for the risk $R_\theta(\hat h)$ in \eqref{risk} and the average risk $r(\hat h)$ in \eqref{average_risk} for data-driven LQR. For the pointwise risk $R_\theta(\hat h)$, we have
\begin{equation} \label{lower_bound_point_wise_LQR}
    R_\theta(\hat h) \ge f(\theta,b_\theta,b_\theta'),
\end{equation}
where $f(\theta,b_\theta,b_\theta')$ is defined in \eqref{risk-lower-bound}. In particular, for unbiased controllers, the lower bound \eqref{lower_bound_point_wise_LQR} reduces to the UCRLB~\eqref{risk-lower-bound_unbiased}.

For the average risk, we have
\begin{align}
    r(\hat h) \ge \int_{\mathcal{D}_\theta} f(\theta,b_\theta,b_\theta')d\theta
    =: F(b_\theta).
\end{align}
The optimal-bias lower bound for $r(\hat h)$ is thus obtained by solving the variational problem of minimizing $F(b_\theta)$.

To compute the lower bounds above, one needs the Fisher information matrix $I_{F,N}(\theta)$, the optimal controller $h(\theta)$, and its derivative $h'(\theta)$. For the scalar case with open-loop data, detailed expressions of the relevant quantities are provided in Appendices~\ref{AppC_Fisher} and~\ref{AppD_CoV_LQR}, respectively. Once these quantities are available, the optimal-bias lower bound for the average risk $r(\hat h)$ can be computed numerically by solving the the Euler-Lagrange equation \eqref{eq: general form of EL equation-vector} subject to the natural boundary conditions \eqref{eq:natural boundary condition-vector}.  The multivariable case can be numerically solved in a similar manner. 

\vspace{-3mm}
\subsection{Simulation Results} \label{Sct7.4}

We now use scalar systems to illustrate the implications of the theory and to benchmark the performance of the controllers introduced above. Without loss of generality, we consider the domain of interest $\mathcal{D}_\theta = \left\{\theta : (A-A_\circ)^2 + (B- B_\circ)^2 \le \rho_\circ^2 \right\}$, where $A_\circ = B_\circ = 0.5$ and $\rho_\circ = 0.4$. The experimental setting is as follows: $Q = 1$, $R = 0.1$, sample size $N = 20$, and $\sigma_e^2 = \sigma_0^2 = \sigma_u^2 = 1$. For each grid point $\theta_i = (A_i, B_i)$ in $\mathcal{D}_\theta$, we perform 200 Monte Carlo trials and use their average to estimate the risk and suboptimality gap of each realized data-driven LQR controller. For the regularization parameters in $\hat h_R$ and $\hat h_{A(\pi)}$, we take $\lambda_1 = 1$ and $\lambda_2 = 0.05$. Moreover, for the Gaussian prior $\pi$ used in $\hat h_{A(\pi)}$, we set $\bar A=\bar B=0.5$ and $\Omega^{-1}=\operatorname{diag}(\sigma_A^2,\sigma_B^2)$ with $\sigma_A^2=\sigma_B^2=0.01778$, so that the marginal values $A=B=0.1$ and $A=B=0.9$ lie approximately within the $3\sigma$ confidence interval centered at $\bar A=\bar B=0.5$. Two fundamental lower bounds are computed for illustration. The first is the UCRLB \eqref{risk-lower-bound_unbiased} for the pointwise risk $R_\theta(\hat h)$ of unbiased decision rules, and the second is the optimal-bias lower bound on the average risk $r_\theta(\hat h)$ that applies to all controllers.

\begin{figure}
    \centering
    \includegraphics[scale=0.16]{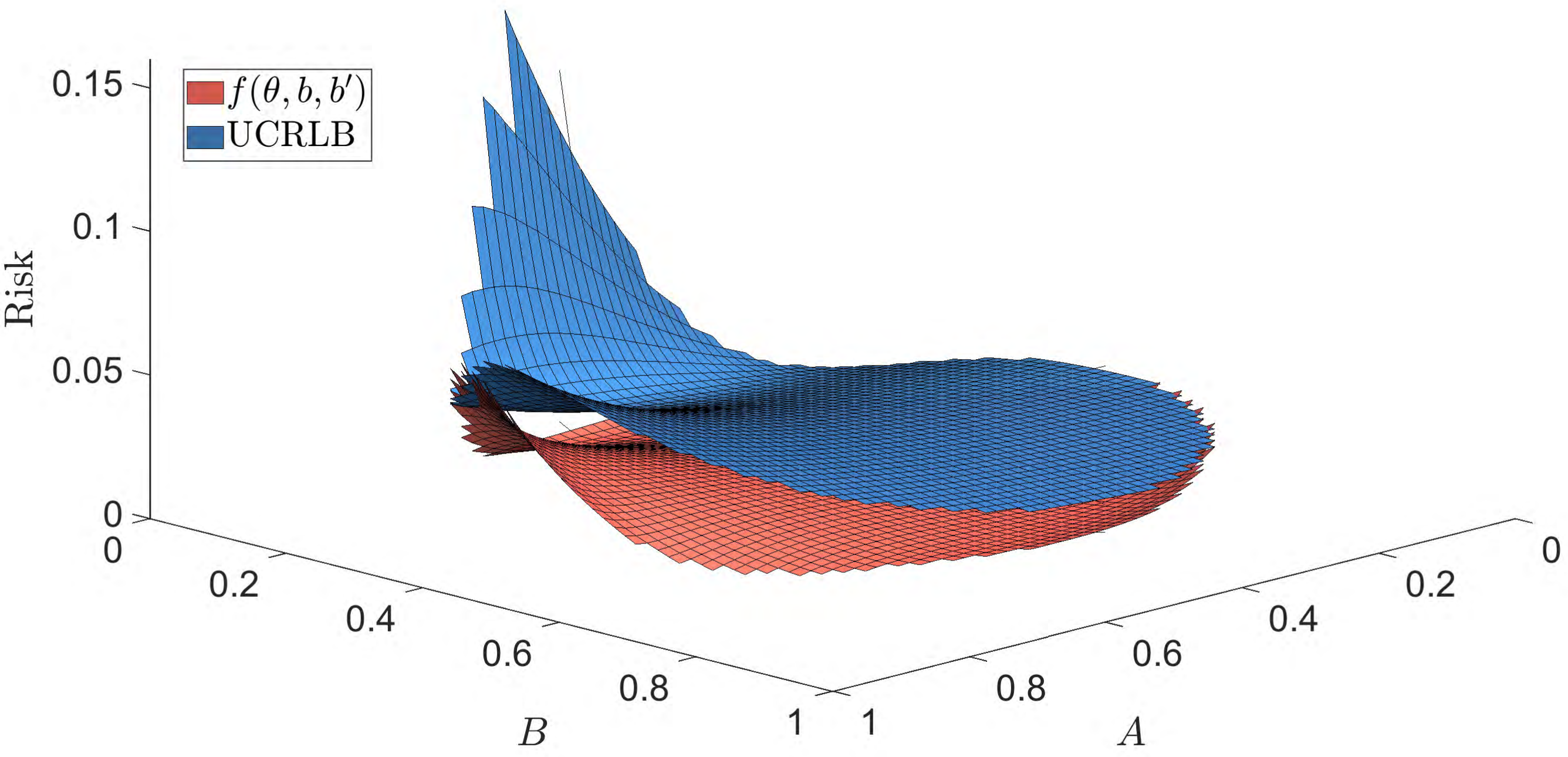}
    \caption{Comparison of the two lower bounds.}
    \label{fig:ex3-two-bounds}
\end{figure}

The resulting two lower bounds over $\mathcal{D}_\theta$ are given in Figures~\ref{fig:ex3-two-bounds}. The red surface labeled $f(\theta,b_\theta,b_\theta')$ represents the value of the Lagrangian after substituting the optimal bias function $b_\theta^*$ and its derivative $(b_\theta^*)'$ into $f(\theta,b_\theta,b_\theta')$, and the blue surface labeled UCRLB represents the bound \eqref{risk-lower-bound_unbiased} for the pointwise risk $R_\theta(\hat h)$. The lower bounds become significantly larger when $B$ is small and $A$ is close to $1$, indicating a region in which good control performance from finite data is intrinsically hard. This is especially pronounced for the UCRLB, which rises sharply in that part of the parameter domain. To understand this behavior, we examine the Fisher information. As shown in the Appendix, under the assumption that the system is asymptotically stable, i.e., $|A|<1$, then the state variance converges to a steady-state value $\sigma_\infty^2$ satisfying
\begin{align}
    \sigma_\infty^2 = A^2 \sigma_\infty^2 + B^2 \sigma_u^2 + \sigma_e^2,
\end{align}
which yields $\sigma_\infty^2 = \frac{B^2 \sigma_u^2 + \sigma_e^2}{1-A^2}$. Consequently, the Fisher information grows linearly with $N$ and is approximately given by $I_{F,N}(\theta) \approx \frac{N}{\sigma_e^2}{\rm{diag}}\left(\dfrac{B^2 \sigma_u^2 + \sigma_e^2}{1-A^2},\sigma_u^2\right)$. This expression shows that, as $B$ becomes smaller and $|A|$ becomes larger, the Fisher information $I_{F,N}(\theta)$ decreases, indicating that the system becomes increasingly difficult to identify. Since the lower bounds depend on the inverse of Fisher information matrix, they increase significantly in this regime. This leads to an important observation: systems that are difficult to identify are likewise difficult to control in a data-driven manner. An alternative interpretation is that when $B$ becomes smaller and $|A|$ becomes larger, the system loses stabilizability. Our observation therefore is consistent with the arguments in \cite{Tsiamis2022learning} and \cite{Ziemann2024regret}: systems with poor stabilizability are intrinsically hard to learn to control.

\begin{figure}
	\centering
	\includegraphics[scale=0.16]{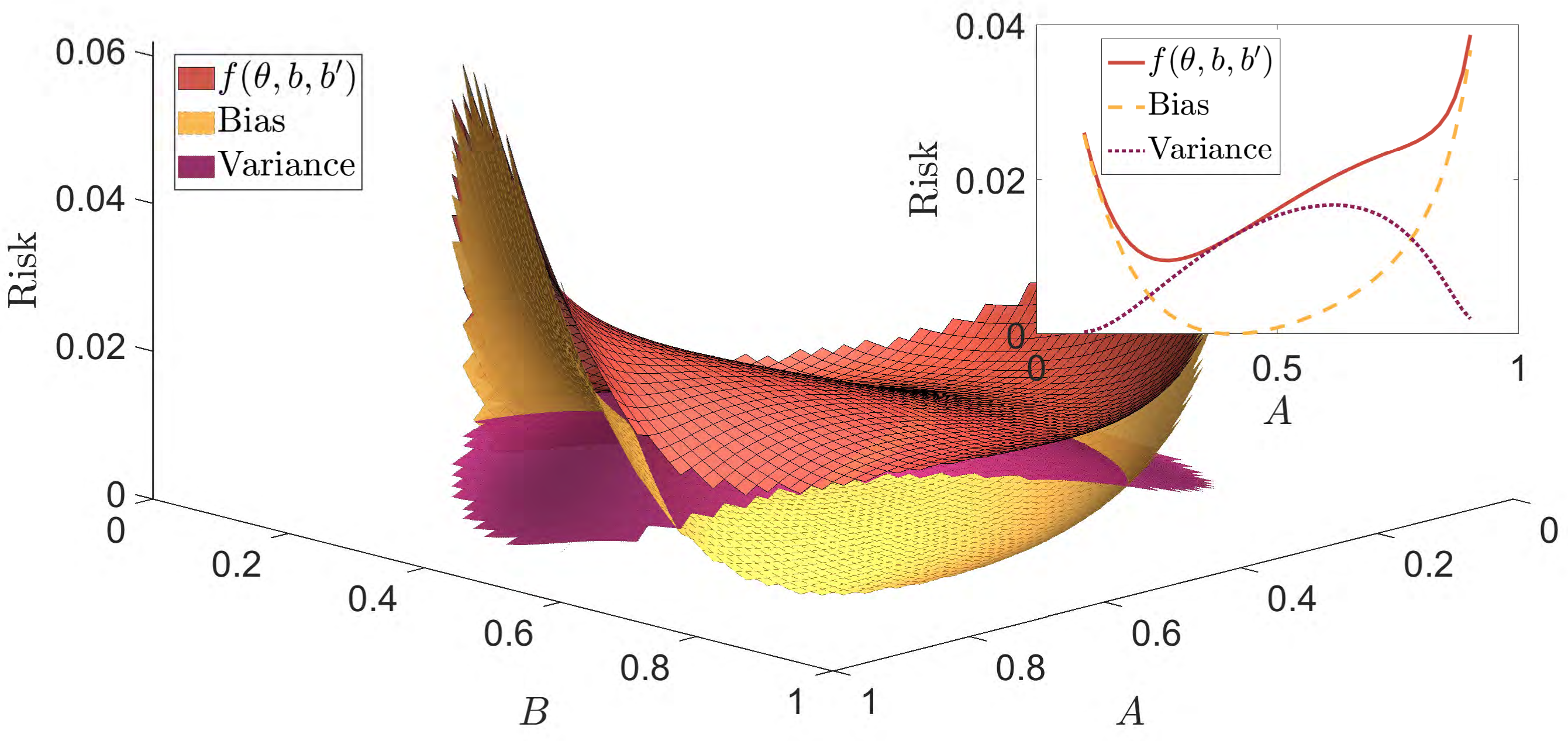}
	\caption{The bias-variance tradeoff in LQR, with a slice at $B = 0.5$ along the direction $A \in [0.1,0.9]$.}
	\label{fig:ex3-tradeoff}
\end{figure}

Figure~\ref{fig:ex3-tradeoff} shows the squared bias and the surrogate variance component independently appearing in $f(\theta,b_\theta^*,(b_\theta^*)')$, which exhibit the similar rise-fall trade-off behavior observed in the previous examples.

Figure~\ref{fig:ex3-all-risks} compares risks of the four data-driven LQR controllers introduced in Section~\ref{Sct7.2}. It is worth noting that, unlike the lower bounds in Figures~\ref{fig:ex3-two-bounds}, which vary smoothly over the parameter domain, the empirical pointwise risks and suboptimality gaps of the data-driven controllers exhibit visible fluctuations. This is an expected phenomenon, caused by the use of a finite data length and a limited number of Monte Carlo runs in estimating the risk, and further amplified by the nonlinear mapping from the estimated parameters to the resulting controller and cost. Despite these local irregularities, the overall trends of the empirical risks remain consistent with the derived lower bounds (to be seen later). Figure~\ref{fig:ex3-all-risks} conveys two main messages. First, it supports the argument that systems that are difficult to identify are also difficult to control in a data-driven manner: the risks of all controllers increase significantly in the region where $B \to 0$ and $A \to 1$. Second, it illustrates that there is no uniformly superior data-driven LQR controller over the entire parameter domain. Among the four methods, although $\hat{h}_{A(P)}$ generally achieves the smallest risk over a substantial portion of the parameter space, reflecting its average risk minimization design, there remain nontrivial subregions where the other three controllers perform better. For instance, $\hat h_{A(\pi)}$ performs better than $\hat h_{A(P)}$ near the center of the domain. Hence, a controller that is optimal in an average sense need not be optimal pointwise. The figure therefore demonstrates that the relative performance of data-driven LQR controllers depends strongly on the underlying system parameters. 

\begin{figure}
	\centering
	\includegraphics[scale=0.16]{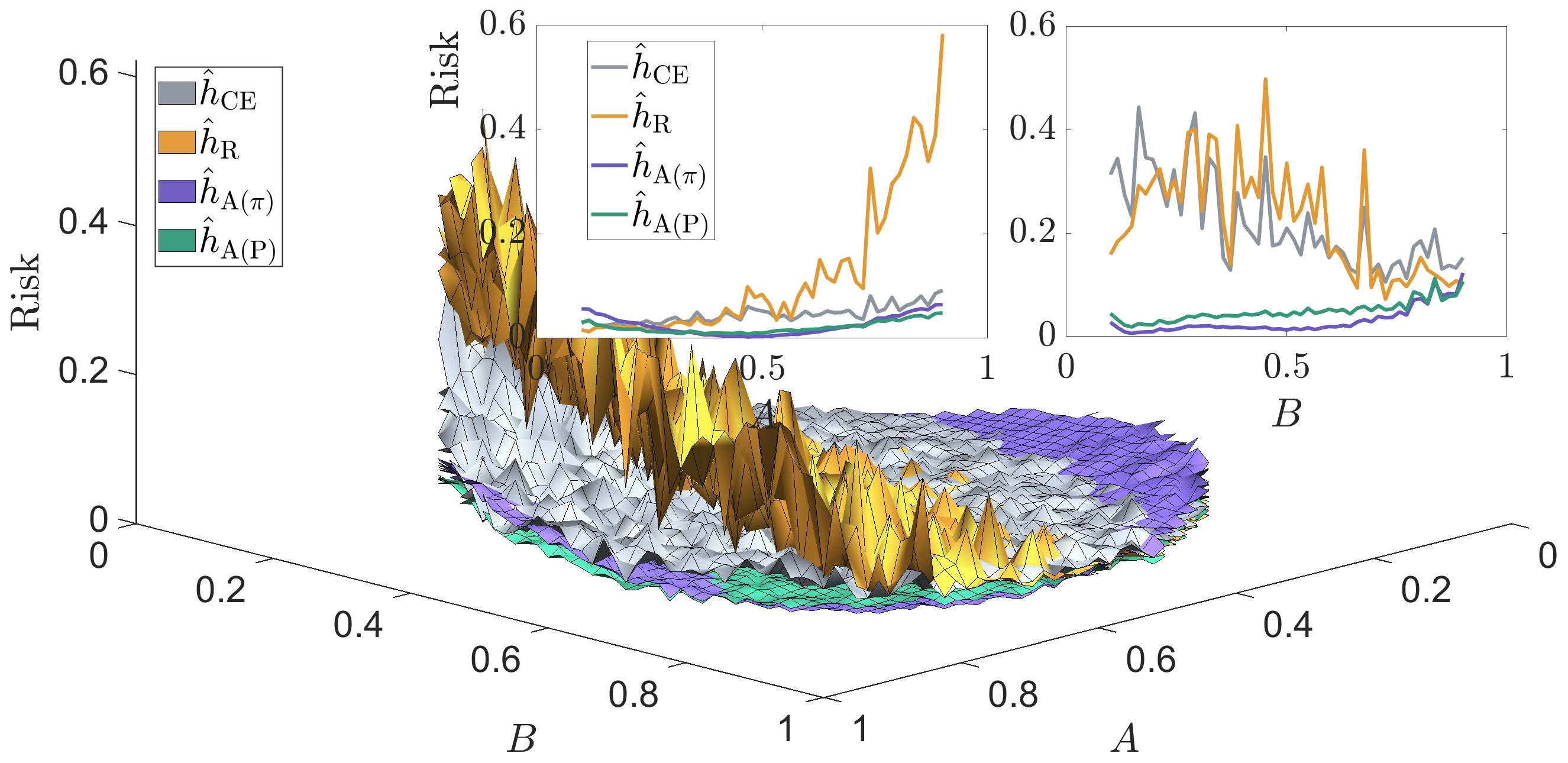}
	\caption{Pointwise risks $R_\theta (\hat{h})$ of different controllers over $\mathcal{D}_\theta$. The two slices correspond to varying $A\in[0.1,0.9]$ with $B=0.5$, and varying $B\in[0.1,0.9]$ with $A=0.5$.}
	\label{fig:ex3-all-risks}
\end{figure}
\begin{figure}
	\centering
	\includegraphics[scale=0.16]{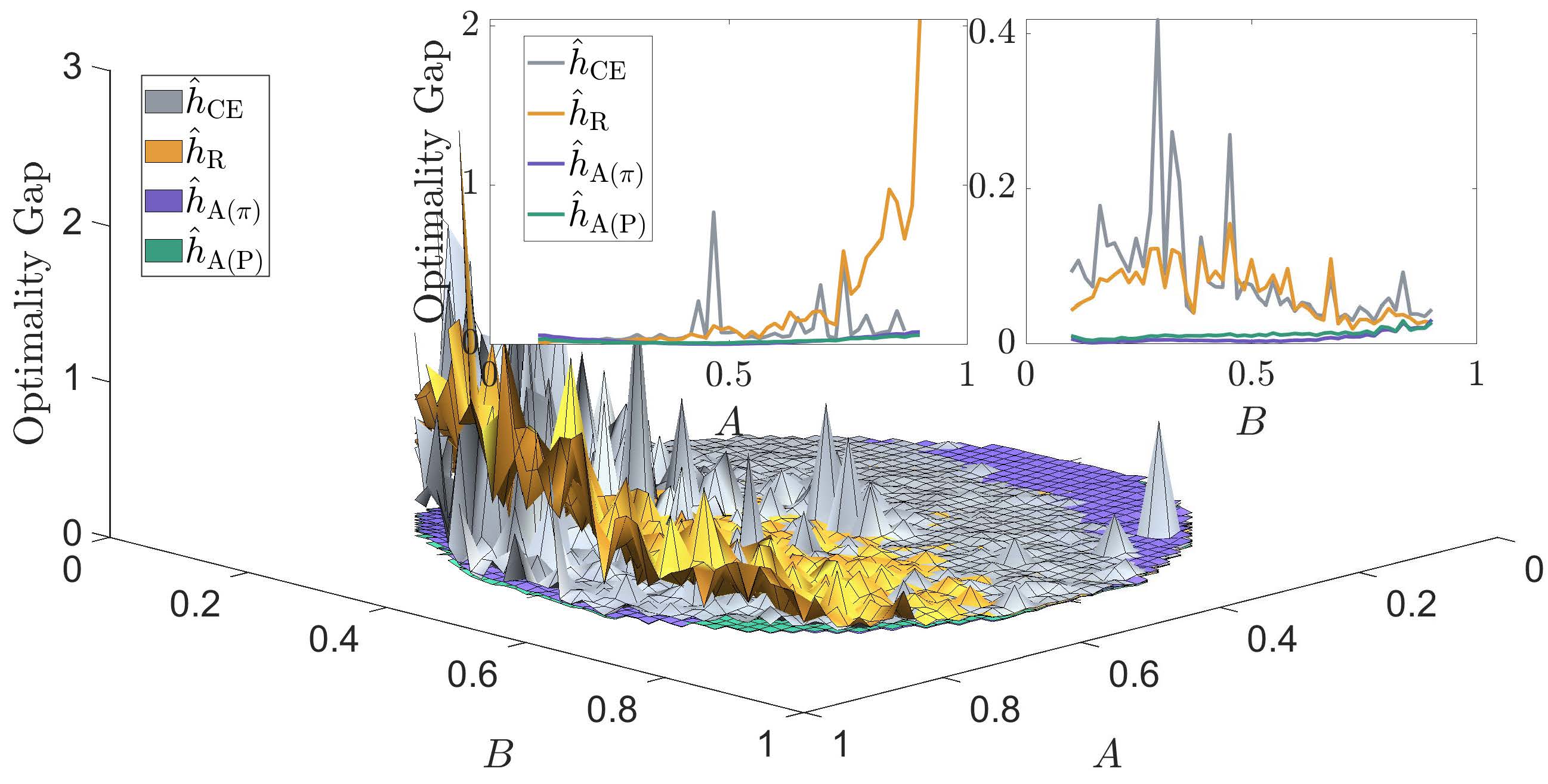}
	\caption{Suboptimality gaps of different controllers over $\mathcal{D}_\theta$. The two slices correspond to varying $A\in[0.1,0.9]$ with $B=0.5$, and varying $B\in[0.1,0.9]$ with $A=0.5$.}
	\label{fig:ex3-all-gaps}
\end{figure}

Figure~\ref{fig:ex3-all-gaps} compares the realized suboptimality gaps of the four data-driven LQR controllers. As shown in the figure, the suboptimality gaps exhibit behavior similar to that of the risks in Figure~\ref{fig:ex3-all-risks}, supporting the use of the loss function \eqref{E5} as a meaningful surrogate for the realized suboptimality gap $\Delta_J(\hat K)$ \eqref{eq:LQR_cost_difference} for the data-driven LQR problem.

\begin{figure}[!t]
	\centering
	
	\begin{subfigure}[t]{\columnwidth}
		\centering
		\includegraphics[width=0.82\linewidth]{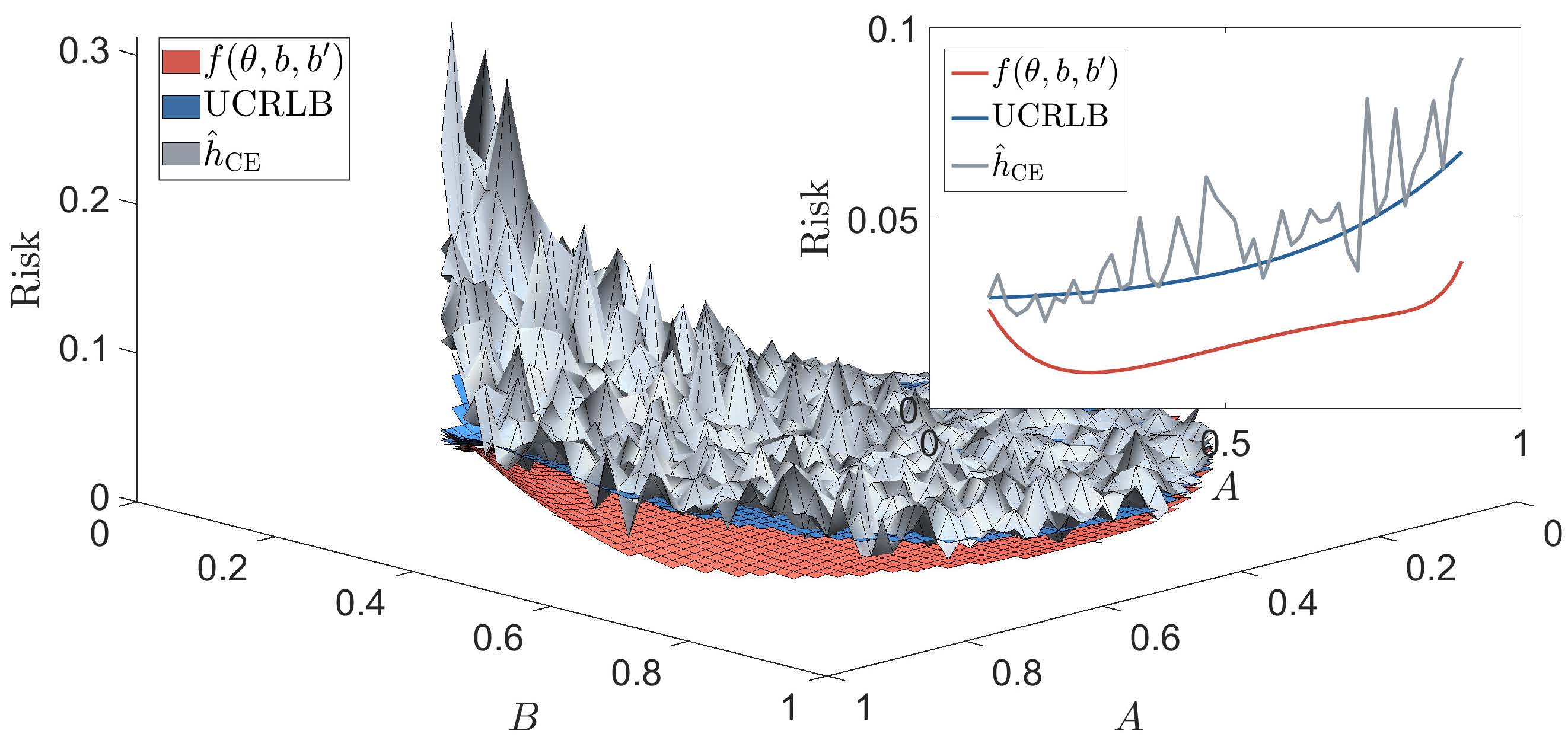}
		\caption{Pointwise risk $R_\theta (\hat{h})$ of the certainty equivalence controller $\hat{h}_{\text{CE}}$}
		\label{fig:cep}
	\end{subfigure}
	
	\vspace{0.3em}
	
	\begin{subfigure}[t]{\columnwidth}
		\centering
		\includegraphics[width=0.82\linewidth]{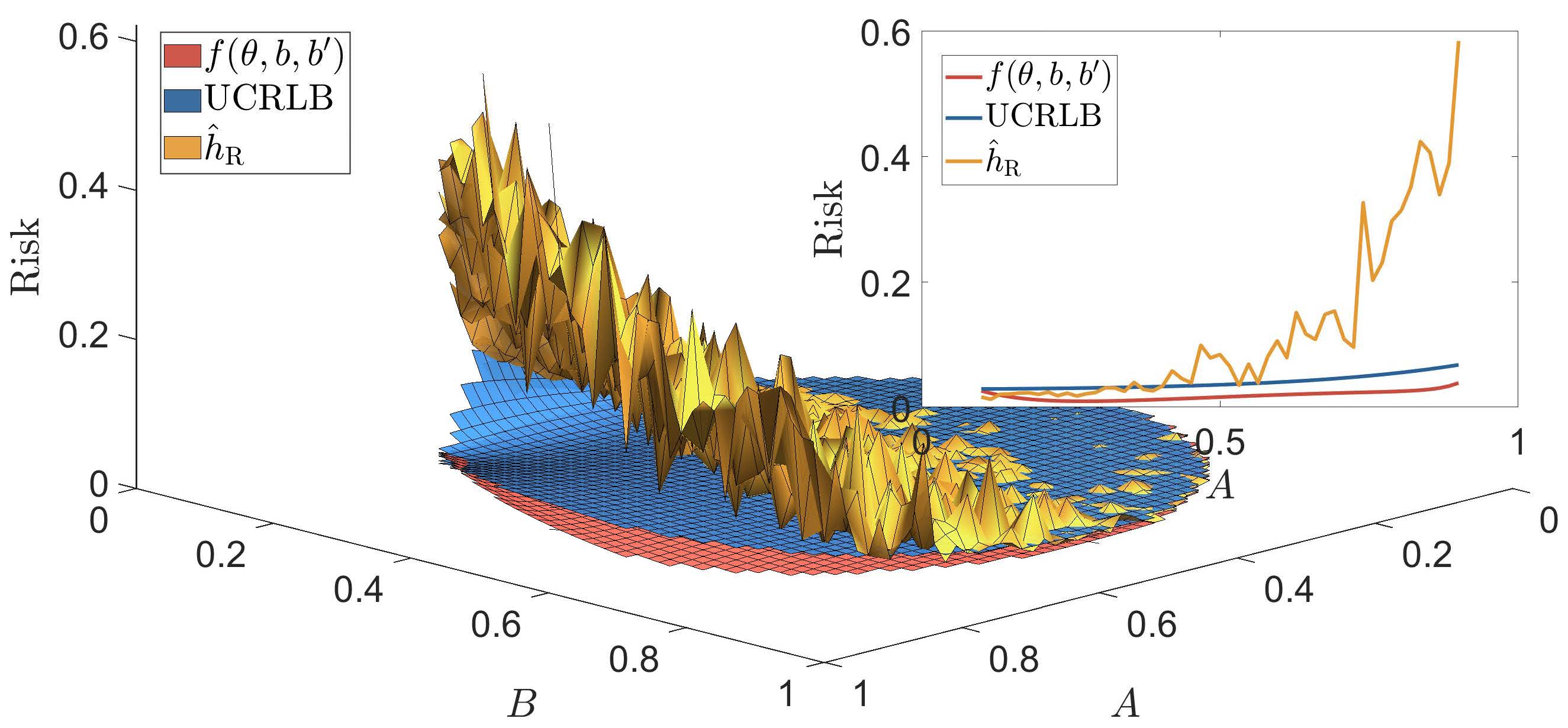}
		\caption{Pointwise risk $R_\theta (\hat{h})$ of the SDP-based direct method $\hat{h}_{R}$  \cite{Dorfler2023certainty}}
		\label{fig:sdp}
	\end{subfigure}
	
	\vspace{0.3em}
	
	\begin{subfigure}[t]{\columnwidth}
		\centering
		\includegraphics[width=0.82\linewidth]{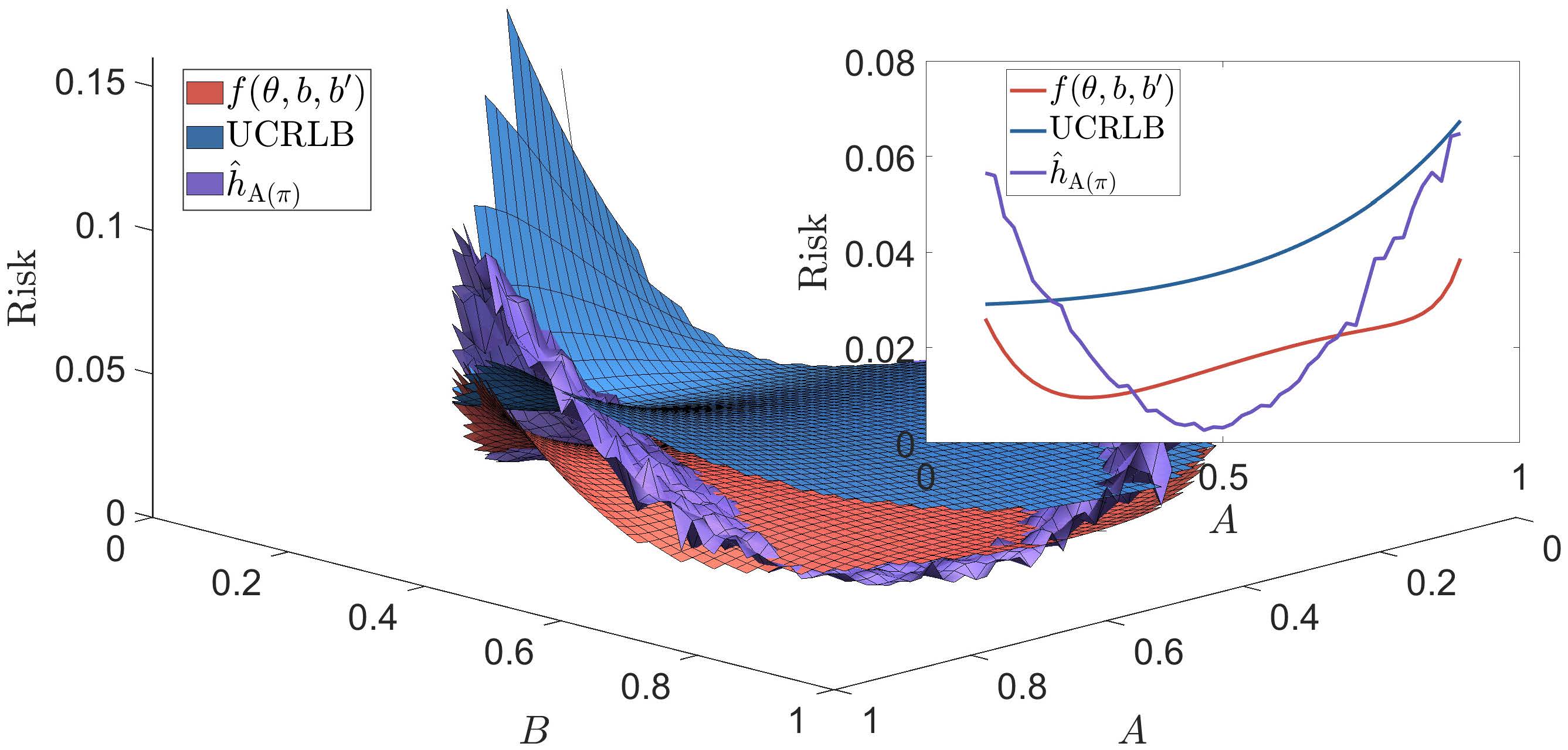}
		\caption{Pointwise risk $R_\theta (\hat{h})$ of the covariance-parameterized Bayesian method $\hat{h}_{A(\pi)}$ \cite{Schwaller2026bayesian}}
		\label{fig:sdpr}
	\end{subfigure}
	
	\vspace{0.3em}
	
	\begin{subfigure}[t]{\columnwidth}
		\centering
		\includegraphics[width=0.82\linewidth]{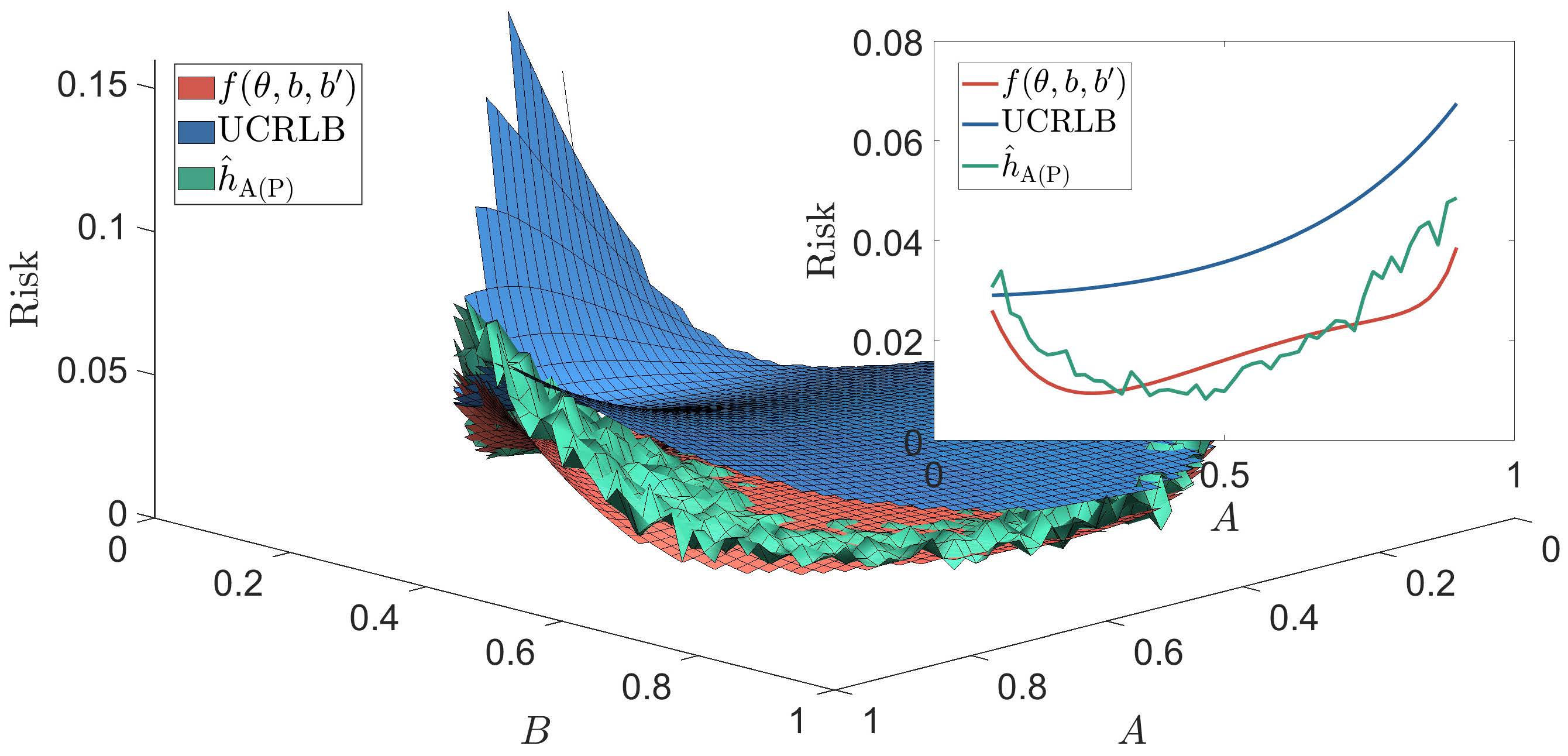}
		\caption{Pointwise risk $R_\theta (\hat{h})$ of the average risk tuning controller $\hat{h}_{A(P)}$}
		\label{fig:ap}
	\end{subfigure}
	
	\caption{Comparison of data-driven LQR controllers vs. the lower bounds over $\mathcal{D}_\theta$, with slices at $B = 0.5$ along the direction $A \in [0.1,0.9]$.}
	\label{fig:LQR_comparisons}
\end{figure}

The four subplots in Figure~\ref{fig:LQR_comparisons} compare the performance of four data-driven LQR controllers with the corresponding lower bounds. In Figure~\ref{fig:cep}, the certainty equivalence controller $\hat h_{\text{CE}}$ follows a similar profile and remains relatively close to the UCRLB over much of the parameter domain. Since this controller is consistent \cite{Mania2019certainty}, the UCRLB remains a meaningful pointwise lower bound for the risk of $\hat h_{\text{CE}}$ and serves as a useful benchmark. By contrast, the situation is different for $\hat h_{R}$, $\hat h_{A(\pi)}$ and $\hat h_{A(P)}$, shown in Figures~\ref{fig:sdp},  ~\ref{fig:sdpr} and~\ref{fig:ap}. Since these controllers are biased when the regularization parameters are finite, the UCRLB is no longer an informative benchmark. Instead, their performance should be compared with the lower bound $f(\theta,b,b')$, which accounts for bias and therefore applies to the average risk. 

Figure~\ref{fig:ap} shows that the average risk minimizing controller $\hat h_{A(P)}$ achieves the best overall performance among the controllers considered. Its risk follows a similar profile and remains close to the lower bound $f(\theta,b,b')$ throughout the parameter domain, and its average risk is the smallest among the four controllers. Moreover, $\hat h_{A(\pi)}$ and $\hat h_{A(P)}$ can be compared from a unified Bayesian perspective, where $\hat h_{A(\pi)}$ corresponds to a Gaussian prior, and $\hat h_{A(P)}$ corresponds to a uniform prior over the parameter domain. Since the Gaussian prior assigns more probability mass near its mean, $\hat h_{A(\pi)}$ emphasizes the center of the domain more than the boundary. Consequently, as shown in Figure~\ref{fig:ex3-all-risks} and Figures~\ref{fig:sdpr} and~\ref{fig:ap}, $\hat h_{A(\pi)}$ performs better than $\hat h_{A(P)}$ near the center, but worse near the boundary. This illustrates how the choice of prior affects performance over the parameter domain. 

Finally, these observations also illustrate the ``waterbed'' effect in data-driven LQR. In Figure~\ref{fig:sdpr}, the substantial reduction of risk around $A=0.5$ is compensated by a pronounced increase in risk for values of $A$ farther away from $0.5$. In contrast, Figure~\ref{fig:ap} shows a more moderate risk reduction near the center, accompanied by a smaller increase elsewhere. Thus, these observations show that reducing the risk in one region of the parameter domain is accompanied by increased risk in other regions. The proposed lower bound makes this redistribution effect explicit and provides a quantitative way to assess how severe it is.

	\vspace{-3mm}
	\section{Conclusions} \label{Sct8}

Motivated by fundamental limits in point estimation, this work developed a statistical decision framework for data-driven control. By combining the bias-variance decomposition with the CRLB, we derived fundamental lower bounds on the performance for data-driven control problems under quadratic loss. These results revealed quantitative limitations that no data-driven controller design can circumvent and make the bias--variance tradeoff explicit. They also reveal a ``waterbed'' effect in data-driven control, analogous to the fundamental limitation of sensitivity shaping in classical control.

The two case studies, namely the feedforward control problem and the benchmark LQR problem, illustrated several implications of the framework. First, because the derived lower bounds depend explicitly on the inverse of Fisher information matrix, they show that the statistical difficulty of data-driven control is fundamentally shaped by the underlying system itself and the information available in the data. Second, the results show that no controller can be uniformly best over the entire parameter domain. Accordingly, controller design should be guided by the region of the parameter space most relevant to the intended application.

In future, we will extend the proposed framework to analyze the statistical limitations of data-driven MPC, reinforcement learning, and online control methods.
	
%
%
	\section*{References}
	\bibliographystyle{IEEEtran}
	\bibliography{refs}

	\appendices 
	\numberwithin{equation}{section}
	\vspace{-3mm}

    \section{Cram\'er-Rao Lower Bound (Theorem~\ref{Thm:CRLB})} \label{AppA_CRLB}

	\begin{proof}
		Using the bias-variance decomposition of risk in \eqref{risk-decomposition}, we see that the key step is to show that
		\begin{equation}\label{eq:CRLB}
			\Var_\theta[\hat{h}(Z^N)]
			\succcurlyeq
			m'(\theta) I_{F,N}^{-1}(\theta) \bigl(m'(\theta)\bigr)^\top,
			\quad \forall \theta\in \mathcal{D}_\theta.
		\end{equation}
		
		We first show the result when the support of $p(Z^N; \theta)$ does not depend on $\theta$. The matrix version of Cauchy-Schwarz inequality (see \cite[Th. 1]{Tripathi1999matrix}) gives
\begin{align*}
	\Var_\theta\left[\hat{h}(Z^N)\right] \succcurlyeq & \Cov_\theta\left[\hat{h}(Z^N), q_\theta(Z^N)\right] \Var_\theta^{-1}\left[q_\theta(Z^N)\right] \times\\
&\Cov_\theta\!\left[q_\theta(Z^N), \hat{h}(\theta)(Z^N)\right]
\end{align*}
for any random vector $q_\theta(Z^N)$ with finite second moments. Taking $q_\theta(Z^N)=S_N(Z^N,\theta)$ gives that
$\Var_\theta[q_\theta(Z^N)] = I_{F,N}(\theta)$. Furthermore, under suitable regularity conditions, we have that 
\begin{align*}
	\Cov_\theta\left[\hat{h}(Z^N), S_N(Z^N,\theta)\right]&=\mathbb{E}_\theta\left[\hat{h}(Z^N)S_N(Z^N,\theta)\right] \\
	&= \frac{\partial}{\partial \theta^\top}\mathbb{E}_\theta\left[\hat{h}(Z^N)\right],
\end{align*}
where the first equality is due to that $S_N(Z^N;\theta)=\frac{\partial \log p(Z^N;\theta)}{\partial \theta}$ has zero mean, and the second equality comes from 
\begin{align*}
	&\mathbb{E}_\theta\left[h(Z^N)\frac{\partial}{\partial \theta^\top}\log p(Z^N;\theta)\right] \\
= &\int \hat{h}(Z^N)\lim_{\Delta\to 0}
\frac{p(z;\theta+\Delta)-p(z;\theta)}{\Delta}dz \\
= &\lim_{\Delta\to 0}
\frac{\mathbb{E}_{\theta+\Delta}[h(Z^N)]-\mathbb{E}_\theta[h(Z^N)]}{\Delta} \\
= &\frac{\partial}{\partial \theta^\top}\mathbb{E}_\theta\!\left[h(Z^N)\right].
\end{align*}
In this way, we have 
\begin{equation*}
	\begin{split}
		\Var_\theta\left[\hat{h}(Z^N)\right] &\succcurlyeq \frac{\partial}{\partial \theta^\top}\mathbb{E}_\theta[\hat{h}(Z^N)] I_{F,N}^{-1}(\theta) \bigl(\frac{\partial}{\partial \theta^\top}\mathbb{E}_\theta[\hat{h}(Z^N)]\bigr)^\top \\
		&=m'(\theta) I_{F,N}^{-1}(\theta) \bigl(m'(\theta)\bigr)^\top.
	\end{split}	
\end{equation*}

For the setting in \eqref{eq:Bayes_pdfs}, in which $\varepsilon_t(Z^t;\theta)$ depends explicitly on $\theta$, the regularity conditions must be adjusted accordingly. With these modifications in place, the CRLB in Theorem~\ref{Thm:CRLB} follows by an argument analogous to that used in the above proof. We omit the details for brevity.
	\end{proof}

    \vspace{-3mm}
	\section{Euler-Lagrange Equation (Theorem~\ref{Theorem-EL-Equation-vector})} \label{AppB_EL_Equation}

\begin{proof}
We first prove Theorem~\ref{Theorem-EL-Equation-vector} in the scalar case $n_\theta=n_h=1$, with parameter domain $\mathcal{D}_\theta=[\theta_1,\theta_2]$. To be specific, we show that under Assumptions~\ref{Assp1} and \ref{Assp2}, if a minimizer $b_\theta^*$ of the variational problem \eqref{Bayes-risk-CoV-scalar} exists, then it is unique and satisfies the Euler-Lagrange equation
\begin{equation}  \label{CoV-Scalar-EL-Equation}
	\frac{d}{d\theta}\left(\frac{\partial f}{\partial b_\theta'}\right) - \frac{\partial f}{\partial b_\theta} = 0, \qquad \forall \theta \in \left(\theta_1,\theta_2\right), 
\end{equation}
subject to the natural boundary conditions
\begin{equation} \label{CoV-Scalar-EL-Natural_BC}
	\left.\frac{\partial f}{\partial b_\theta'}\right|_{\theta = \theta_1}=0, \quad \left.\frac{\partial f}{\partial b_\theta'}\right|_{\theta = \theta_2}=0.
\end{equation}

We begin by calculating the variation $\tilde{F}$ of the functional \eqref{Bayes-risk-CoV-scalar}. Let $\tilde{b}_\theta$ and $\tilde{b}_\theta'$ be the variation and its derivative. Then, the increment $\tilde{F}= F(b_\theta+\tilde{b}_\theta)-F(b_\theta)$ is
\begin{equation*}
\begin{split}
   \tilde{F} &= \int_{\theta_1}^{\theta_2}\left(f\bigl(\theta, b_\theta+\tilde{b}_\theta, b_\theta'+\tilde{b}_\theta'\bigr)-f\bigl(\theta,b_\theta,b_\theta'\bigr)\right) d\theta \\
     & =\int_{\theta_1}^{\theta_2}\left(\frac{\partial f}{\partial b_\theta}\tilde{b}_\theta + \frac{\partial f}{\partial {b}_\theta'}\tilde{b}_\theta'\right)d\theta + \cdots,
\end{split}
\end{equation*}
where the second equality is due to the Taylor's expansion, and the dots denote terms of order higher than 1 w.r.t. $\tilde{b}_\theta$ and $\tilde{b}_\theta'$. After omitting high order terms, the principal linear part of the increment $\tilde{J}$ is
\begin{equation*}
  \tilde{F} = \int_{\theta_1}^{\theta_2}\left(\frac{\partial f}{\partial b_\theta}\tilde{b}_\theta + \frac{\partial f}{\partial {b}_\theta'}\tilde{b}_\theta'\right) d\theta.
\end{equation*}
Integration by parts now gives
\begin{equation*}
  \begin{split}
    \tilde{F} = &\int_{\theta_1}^{\theta_2}\left(\frac{\partial f}{\partial b_\theta} - \frac{d}{d\theta}\left(\frac{\partial f}{\partial {b}_\theta'}\right)\right) \tilde{b}_\theta  d\theta + \frac{\partial f}{\partial {b}_\theta'} \tilde{b}_\theta\Big|_{\theta = \theta_1}^{\theta = \theta_2} \\
    = &\int_{\theta_1}^{\theta_2}\left(\frac{\partial f}{\partial b_\theta} - \frac{d}{d\theta}\left(\frac{\partial f}{\partial {b}_\theta'}\right)\right) \tilde{b}_\theta d\theta  + \\
	&\frac{\partial f}{\partial {b}_\theta'}\bigl|_{\theta = \theta_2}\tilde{b}_\theta(\theta_2) - \frac{\partial f}{\partial {b}_\theta'}\bigl|_{\theta = \theta_1}\tilde{b}_\theta(\theta_1).
  \end{split}
\end{equation*}
Since $b$ should be an extremal, we have $\tilde{f}$ = 0, requiring that
\begin{subequations}
	\begin{align}
		\frac{\partial f}{\partial b_\theta} - \frac{d}{d\theta}\left(\frac{\partial f}{\partial {b}_\theta'}\right) &= 0, \label{eq:EL-equation}\\
		\frac{\partial f}{\partial {b}_\theta'}\bigl|_{\theta = \theta_2}\tilde{b}_\theta(\theta_2) &= 0, \quad \frac{\partial f}{\partial {b}_\theta'}\bigl|_{\theta = \theta_1}\tilde{b}_\theta(\theta_1) = 0, \label{eq:NB}
	\end{align}
\end{subequations}
where the first condition \eqref{eq:EL-equation} is the Euler-Lagrange equation. Moreover, in contrast to the fixed end-point problem, $b_\theta$ is not required to vanish at $\theta_1$ and $\theta_2$, and $\tilde{b}_\theta$ is arbitrary. Consequently, the natural boundary condition \eqref{CoV-Scalar-EL-Natural_BC} is necessary to guarantee that $\tilde{F}=0$. Furthermore, since $W(\theta) \succ 0$ and $I_F(\theta) \succ 0$ for $\theta \in \mathcal{D}_\theta$, we have that the function $f(\theta,b_\theta,b_\theta')$ is strictly convex in $(b_\theta,b_\theta')$ for every $\theta \in \mathcal{D}_\theta$. Based on \cite[Th. 2.1]{Dacorogna2024introduction}, we then conclude that any sufficiently smooth admissible function satisfying
\eqref{eq: general form of EL equation-vector} and \eqref{eq:natural boundary condition-vector} is the unique minimizer of the variational problem \eqref{Bayes-risk-CoV-scalar}.

The extension to the multivariate setting is analogous and is omitted. 
\end{proof}

    \vspace{-3mm}
	\section{Fisher Information Matrix for LQR} \label{AppC_Fisher}

The likelihood of the state sequence $p(Z^N;\theta)$ can be decomposed using the Markovian property:
\begin{equation}
   p(Z^N;\theta)= p(x_0) \prod_{k=0}^{N-1} p(x_{k+1} | x_k, u_k; \theta).
\end{equation}
The log-likelihood is
\begin{equation*}
   L_N(Z^N;\theta) = \log p(x_0) + \sum_{k=0}^{N-1} \log  p(x_{k+1} | x_k, u_k; \theta).
\end{equation*}
Since the noise $e_k \sim \mathcal{N}(0, \sigma_e^2)$, the conditional distribution of $x_{k+1}$ is $(x_{k+1} | x_k, u_k; \theta) \sim \mathcal{N}(A x_k + B u_k, \sigma_e^2)$. The conditional log-likelihood term is
\begin{align*}
   \log p(x_{k+1} | x_k, u_k; \theta) = &-\frac{1}{2\sigma_e^2} \left(x_{k+1} - A x_k - B u_k\right)^2 - \\
   & \frac{1}{2}\log(2\pi\sigma_e^2) .
\end{align*}
In this way, $L_N(Z^N; \theta)$ can be rewritten as
\begin{align*}
   L_N(Z^N; \theta) = &-\sum_{k=0}^{N-1} \frac{1}{2\sigma_e^2} \left(x_{k+1} - A x_k - B u_k\right)^2 + 
   \\
   &\log p(x_0) - \frac{N}{2}\log(2\pi\sigma_e^2) .
\end{align*}
Then, the score function $S_N(Z^N; \theta)$ is
\begin{equation}
	\begin{split}
		S_N(Z^N; \theta)  & = \frac{\partial}{\partial \theta}L_N(Z^N; \theta)	 \\
		& = \sum_{k=0}^{N-1} \frac{1}{\sigma_e^2} \left(x_{k+1} - A x_k - B u_k\right) \begin{bmatrix} x_k \\ u_k \end{bmatrix} \\
        &= \sum_{k=0}^{N-1} \frac{e_k}{\sigma_e^2} \begin{bmatrix} x_k \\ u_k \end{bmatrix}.
	\end{split}   
\end{equation}
Accoring to \eqref{eq:FIM}, the Fisher information matrix is equal to
\begin{equation*}
	\begin{split}
		I_{F,N}(\theta) &= \mathbb{E}_\theta\left[S_N(Z^N; \theta) S_N^\top(Z^N; \theta)\right] \\
		& = \mathbb{E}_\theta\left[ \left( \sum_{k=0}^{N-1} \frac{e_k}{\sigma_e^2} \begin{bmatrix} x_k \\ u_k \end{bmatrix} \right) \left( \sum_{j=0}^{N-1} \frac{e_j}{\sigma_e^2} \begin{bmatrix} x_j \\ u_j \end{bmatrix}^\top\right) \right] \\
		&= \sum_{k=0}^{N-1} \frac{1}{\sigma_e^4} \mathbb{E}_\theta[e_k^2] \mathbb{E}_\theta\left[ \begin{bmatrix} x_k^2 & x_k u_k \\ u_k x_k & u_k^2 \end{bmatrix} \right] \\
		&= \frac{1}{\sigma_e^2} \sum_{k=0}^{N-1} \begin{bmatrix} \mathbb{E}_\theta[x_k^2] & 0 \\ 0 & \mathbb{E}_\theta[u_k^2] \end{bmatrix},
	\end{split}   
\end{equation*}
where the third and fourth equalities follow from the fact that $\{e_k\}$ is i.i.d. and independent of $\{x_j,u_j\}_{j\le k}$, and from the fact that $x_k$ and $u_k$ are zero-mean and mutually independent. Furthermore, with $\mathbb{E}_\theta[x_0] = 0$ and $\mathbb{E}_\theta[u_k] = 0$, the mean of the state $\mathbb{E}_\theta[x_k]$ is zero for all $k$. The state variance, denoted by $\sigma_k^2 = \mathbb{E}_\theta[x_k^2]$, follows the recurrence relation:
\begin{equation}
   \mathbb{E}_\theta[x_{k+1}^2] = \mathbb{E}_\theta[(A x_k + B u_k + e_k)^2].
\end{equation}
Due to the independence and zero-mean nature of $x_k$, $u_k$, and $e_k$, the cross-terms vanish. We therefore have
\begin{equation}
	\begin{split}
		\sigma_{k+1}^2 &= A^2 \mathbb{E}_\theta[x_k^2] + B^2 \mathbb{E}_\theta[u_k^2] + \mathbb{E}_\theta[e_k^2] \\
		&= A^2 \sigma_k^2 + B^2 \sigma_u^2 + \sigma_e^2.
	\end{split}
\end{equation}
The solution to this recursion from $k=0$ to $k=N-1$ is found by iterating:
\begin{equation}
\sigma_k^2 = A^{2k} \sigma_0^2 + \sum_{j=0}^{k-1} A^{2(k-1-j)} (B^2 \sigma_u^2 + \sigma_e^2).
\end{equation}
The components of the Fisher information matrix are
\begin{align}
   \frac{1}{\sigma_e^2} \sum_{k=0}^{N-1} \mathbb{E}_\theta[u_k^2] &= \frac{1}{\sigma_e^2} \sum_{k=0}^{N-1} \sigma_u^2 = \frac{N \sigma_u^2}{\sigma_e^2}, \\
   \frac{1}{\sigma_e^2} \sum_{k=0}^{N-1} \mathbb{E}_\theta[x_k^2] &= \frac{1}{\sigma_e^2} \sum_{k=0}^{N-1} \sigma_k^2.
\end{align}
The general, $I_{F,N}(\theta)$, by evaluating the geometric sum, is
$\frac{1}{\sigma_e^2}\text{diag}\left(\sum_{k=0}^{N-1} ( A^{2k} \sigma_0^2 + \frac{1 - A^{2k}}{1 - A^2} (B^2 \sigma_u^2 + \sigma_e^2)),N \sigma_u^2 \right)$.

\vspace{-3mm}
\section{Solving Calculus of Variational for LQR} \label{AppD_CoV_LQR}

For the optimal-bias lower bound of the scalar case, the cost functional is given by
\begin{equation} \label{E14}
   F\left(b_\theta\right) = \int_{\mathcal{D}_{\theta}} f\left(\theta,b_\theta,b_\theta'\right),
\end{equation}
where
\begin{align*}	
    f(\theta,b_\theta,b_\theta')  &= b_\theta^2W(\theta) + W(\theta)(m'(\theta))^\top I_{F,N}^{-1}(\theta)m'(\theta),\\
    m'(\theta) &= b_\theta'+ \frac{\partial K}{\partial \theta}, W(\theta) = \Sigma_{K}\otimes(R+B^2P), \\
\end{align*}
with $\Sigma_{K} = \frac{\sigma_e^2}{1-(A-BK)^2}$, $b_\theta' = \begin{bmatrix}
   \frac{\partial b_\theta}{\partial A} \\ \frac{\partial b_\theta}{\partial B}\end{bmatrix} := \begin{bmatrix}b_A \\ b_B\end{bmatrix}$, $\frac{\partial K}{\partial \theta} = \begin{bmatrix} \frac{\partial K}{\partial A} \\ \frac{\partial K}{\partial B}\end{bmatrix}$,
\begin{align*}  
	\frac{\partial K}{\partial A} &= - \frac{BPR+B^3P^2 + ABR \frac{\partial P}{\partial A}}{(R+B^2P)^2}, \\
    \frac{\partial K}{\partial B} &= - \frac{AP(R+B^2P) + ABR \frac{\partial P}{\partial B} - 2AB^2P^2}{(R+B^2P)^2}, \\
	\frac{\partial P}{\partial A} &= \frac{\frac{2APR}{R+B^2P}}{1 - A^2 + \frac{A^2B^2 (2PR + B^2P^2)}{(R+B^2P)^2}}, \\
    \frac{\partial P}{\partial B} &= - \frac{\frac{2A^2 B P^2 R}{(R+B^2P)^2}}{1 - A^2 + \frac{A^2B^2 (2PR + B^2P^2)}{(R+B^2P)^2}},
\end{align*}
where derivatives $\frac{\partial K}{\partial A}$, $\frac{\partial K}{\partial B}$, $\frac{\partial P}{\partial A}$ and $\frac{\partial P}{\partial A}$ come from \eqref{eq:LQR_Riccati} and \eqref{eq:LQR_feedback_gain}. The Euler-Lagrange equation \eqref{eq: general form of EL equation-vector} in becomes
\begin{align}
    \frac{\partial f}{\partial b} - \frac{\partial}{\partial A}\frac{\partial f}{\partial b_A} - \frac{\partial}{\partial B}\frac{\partial f}{\partial b_B} =0.
\end{align}
Moreover, the natural boundary condition \eqref{eq:natural boundary condition-vector} becomes
\begin{align}
\frac{\partial f}{\partial b_A}n_a+\frac{\partial f}{\partial b_B}n_b=0, \ \text{on} \ \partial \mathcal{D}_\theta,
\end{align}
where $n=(n_a,n_b)$ is the outward unit normal.
	
\end{document}